\newcommand{\imAppendix}[1]{}
\title{Correctly Implementing Synchronous Message Passing in the Pi-Calculus By Concurrent Haskell's MVars}
\author{Manfred Schmidt-Schau{\ss} 
\institute{Goethe-University, Frankfurt am Main, Germany}
\email{schauss@ki.cs.uni-frankfurt.de}
\and 
 David Sabel 
\institute{LMU, Munich, Germany} 
\email{david.sabel@ifi.lmu.de}
\thanks{This research is supported by the Deutsche Forschungsgemeinschaft (DFG) under grant SA2908/3-1}
}
\newcommand{\citet}[1]{{\cite{#1}}}
\newenvironment{proof*}{{\it Proof.}}{}
\newtheorem{theorem}{Theorem}[section]
\newtheorem{lemma}[theorem]{Lemma} 
\newtheorem{example}[theorem]{Example} 
\newtheorem{definition}[theorem]{Definition} 
\newtheorem{remark}[theorem]{Remark} 
\newtheorem{proposition}[theorem]{Proposition}
\newtheorem{conjecture}[theorem]{Conjecture}
\begin{document}
\newcommand{\inputxy}[2]{#1(#2)}
\newcommand{\outputxy}[2]{\overline{#1}#2}
\newcommand{\tauNullInduced}[1][T]{\phi_{0,#1}}
\newcommand{\tauInduced}[1][T]{\phi_{#1}}
\newcommand{\sigmaNullInduced}[1][T]{\psi_{0,#1}}
\newcommand{\sigmaInduced}[1][T]{\psi_{#1}}
\newcommand{\putCh}{\ensuremath{\mathtt{putC}}}
\newcommand{\putS}{\ensuremath{\mathtt{putS}}} 
\newcommand{\takeS}{\ensuremath{\mathtt{takeS}}}
\newcommand{\takeCh}{\ensuremath{\mathtt{takeC}}}
\newcommand{\typetau}{{\mathfrak{t}}}
\newcommand{\barb}[1]{\,{\Rsh}^{#1}}
\newcommand{\maybarb}{\!\downharpoonright}
\newcommand{\maydivbarb}{\!\upharpoonright}
\newcommand{\shouldbarb}{\downharpoonright\!\downharpoonright}
\newcommand{\mustdivbarb}{\upharpoonright\!\upharpoonright}
\newcommand{\leqmaybarb}{\leq_{c,\maybarb_x}}
\newcommand{\leqshouldbarb}{\leq_{c,\shouldbarb_x}}
\newcommand{\leqbarb}{\leq_{c,\mathit{barb}}}
\newcommand{\simbarb}{\sim_{c,\mathit{barb}}}
\newcommand{\cplambda}{ \texttt{cp}\lambda}
\newcommand{\noncp}{\mathit{noncp}} 
\newcommand{\rl}{\mathit{rl}}
\newcommand{\PT}{\mathit{PT}}
\newcommand{\wrt}{w.r.t.\,}
\newcommand{\eg}{e.g.}
\newcommand{\ie}{i.e.}
\newcommand{\SR}{\ensuremath{\mathit{SR}}}
\newcommand{\NSR}{\ensuremath{\mathit{NSR}}}
\newcommand{\infsr}{\ensuremath{{\mathit{infSR}}}}
\newcommand{\tBot}{{\tt Bot}}
\newcommand{\ITT}{{\mathit{IT}}}
\newcommand{\italt}{{\mathit{alt}}}
\newcommand{\tletrx}[2]{(\tletrec~#1 ~{\tt in}~#2)}
\newcommand{\iRED}{{\mathit{Red}}}
\newcommand{\RCtxt}{\mathbb{R}}
\newcommand{\itrace}{{\mathit{trace}}}
\newcommand{\nullproc}{{\mathbf{0}}}
\newcommand{\PAR}{{\ensuremath{\hspace{0.1mm}{\!\scalebox{2}[1]{\tt |}\!}\hspace{0.1mm}}}}
\newcommand{\OR}{~|~}
\newcommand{\MVAR}[2]{{#1}\,{\normalfont\textbf{\textsf{m}}}\,{#2}}
\newcommand{\EMPTYMVAR}[1]{\MVAR{{#1}}{-}}
\newcommand{\Proc}{\textit{Proc}}
\newcommand{\ProcCH}{{\ensuremath{\textit{Proc}_{\CH}}}}
\newcommand{\ProcCHF}{{\ensuremath{\textit{Proc}_{\CHF}}}}
\newcommand{\ExprCH}{{\ensuremath{\textit{Expr}_{\CH}}}}
\newcommand{\ExprCHF}{{\ensuremath{\textit{Expr}_{\CHF}}}}
\newcommand{\Expr}{\textit{Expr}}
\newcommand{\Typ}{\textit{Typ}}
\newcommand{\TypCH}{{\ensuremath{\textit{Typ}_{\CH}}}}
\newcommand{\TypCHF}{{\ensuremath{\textit{Typ}_{\CHF}}}}
\newcommand{\IExpr}{\textit{IExpr}}
\newcommand{\MExpr}{\textit{MExpr}}
\newcommand{\MExprCH}{{\ensuremath{\textit{MExpr}_{\CH}}}}
\newcommand{\MExprCHF}{{\ensuremath{\textit{MExpr}_{\CHF}}}}
\newcommand{\IMExpr}{\textit{IMExpr}}
\newcommand{\Var}{\textit{Var}}
\newcommand{\Id}{I}
\newcommand{\ignore}[1]{}
\newcommand{\NEW}{\nu}
\newcommand{\THREAD}[2]{{#1}\,{\Leftarrow}\,{#2}}
\newcommand{\LAZYTHREAD}[2]{{#1}\,{\stackrel{\mathit{lazy}}{\Leftarrow}}\,{#2}}
\newcommand{\myvdash}{\hspace*{.5pt}{\vdash}\hspace*{.5pt}}
\newcommand{\mydcolon}{\hspace*{.5pt}{::}\hspace*{.5pt}}
\newcommand{\myquad}{\!\quad\!}
\newcommand{\SHARE}[2]{{#1}={#2}}
\newcommand{\arity}{\mathrm{ar}}
\newcommand{\tletr}{\text{\normalfont\ttfamily letrec}}
\newcommand{\tletrec}{\text{\normalfont\ttfamily letrec}}
\newcommand{\tof}{\text{{\normalfont\ttfamily of}}}
\newcommand{\tTrue}{\text{{\normalfont\ttfamily True}}}
\newcommand{\tFalse}{\text{{\normalfont\ttfamily False}}}
\newcommand{\tin}{\text{{\normalfont\ttfamily in}}}
\newcommand{\tseq}{\text{{\normalfont\ttfamily seq}}}
\newcommand{\tcase}{\text{{\normalfont\ttfamily case}}}
\newcommand{\treturn}{\text{\normalfont\ttfamily return}}
\newcommand{\tbind}{{\,\text{\normalfont{\ensuremath{\texttt{\char062\!\char062=}}}}}\,}
\newcommand{\tbindthen}{{\,\text{\normalfont{\ensuremath{\texttt{\char062\!\char062}}}}}\,}
\newcommand{\tforkIO}{\text{{\normalfont\ttfamily forkIO}}}
\newcommand{\tfuture}{\text{{\normalfont\ttfamily future}}}
\newcommand{\ttakeMVar}{\text{{\normalfont\ttfamily takeMVar}}}
\newcommand{\tnewMVar}{\text{{\normalfont\ttfamily newMVar}}}
\newcommand{\tnewEmptyMVar}{\text{{\normalfont\ttfamily newEmptyMVar}}}
\newcommand{\tputMVar}{\text{{\normalfont\ttfamily putMVar}}}
\newcommand{\FV}{\mathit{FV}}
\newcommand{\FN}{\mathit{FN}}
\newcommand{\BN}{\mathit{BN}}
\newcommand{\MTHREAD}[2]{{#1} \xLongleftarrow{\!\!\text{\normalfont\sffamily main}\!\!}{#2}}
\newcommand{\PCtxtsCH}{{\ensuremath{\textit{PCtxt}_{\CH}}}}
\newcommand{\PCtxtsCHF}{{\ensuremath{\textit{PCtxt}_{\CHF}}}}
\newcommand{\PCtxts}{\textit{PCtxt}}
\newcommand{\PCtxt}{\mathbb{D}}
\newcommand{\CtxtsCH}{{\ensuremath{\textit{Ctxt}_{\CH}}}}
\newcommand{\CtxtsCHF}{{\ensuremath{\textit{Ctxt}_{\CHF}}}}
\newcommand{\Ctxts}{\textit{CCtxt}}
\newcommand{\MCtxtsCH}{{\ensuremath{\textit{MCtxt}_{\CH}}}}
\newcommand{\MCtxtsCHF}{{\ensuremath{\textit{MCtxt}_{\CHF}}}}
\newcommand{\MCtxts}{\textit{MCtxt}}
\newcommand{\ECtxtsCH}{{\ensuremath{\textit{ECtxt}_{\CH}}}}
\newcommand{\ECtxtsCHF}{{\ensuremath{\textit{ECtxt}_{\CHF}}}}
\newcommand{\ECtxts}{\textit{ECtxt}}
\newcommand{\FCtxtsCH}{{\ensuremath{\textit{FCtxt}_{\CH}}}}
\newcommand{\FCtxtsCHF}{{\ensuremath{\textit{FCtxt}_{\CHF}}}}
\newcommand{\FCtxts}{\textit{FCtxt}}
\newcommand{\Ctxt}{\mathbb{C}}
\newcommand{\TCtxt}{\mathbb{T}}
\newcommand{\TCtxts}{\textit{TCtxt}}
\newcommand{\SCtxt}{\mathbb{S}}
\newcommand{\SCtxts}{\textit{SCtxt}}
\newcommand{\MultiCtxt}{\widetilde{\mathbb{C}}}
\newcommand{\MCtxt}{\mathbb{M}}
\newcommand{\ECtxt}{\mathbb{E}}
\newcommand{\FCtxt}{\mathbb{F}}
\newcommand{\LCtxt}{\mathbb{L}}
\newcommand{\LCtxts}{\textit{LCtxt}}
\newcommand{\ACtxts}{\textit{ACtxts}}
\newcommand{\LHatCtxts}{\textit{$\widehat{LCtxt}$}}
\newcommand{\ACtxt}{\mathbb{A}}
\newcommand{\KCtxt}{\mathbb{K}}
\newcommand{\iEnv}{\mathit{Env}}
\newcommand{\reduce}[1][]{\xrightarrow{sr{#1}}}
\newcommand{\maycon}{{\downarrow}}
\newcommand{\mustcon}{{\Downarrow}}
\newcommand{\mustdiv}{{\Uparrow}}
\newcommand{\maydiv}{{\uparrow}}
\newcommand{\maycontau}{{\downarrow}_0}
\newcommand{\mustcontau}{{\Downarrow}_0}
\newcommand{\mustdivtau}{{\Uparrow}_0}
\newcommand{\maydivtau}{{\uparrow}_0}
\newcommand{\tundefined}{\text{{\normalfont\ttfamily undefined}}}
\newcommand{\iRed}{\mathit{Red}}
\newcommand{\ialts}{\mathit{alts}}
\newcommand{\ttmvar}{\text{tmvar}}
\newcommand{\tpmvar}{\text{pmvar}}
\newcommand{\srnr}{\mathit{srnr}}
\newcommand{\srnrp}{\mathit{srnrp}}
\newcommand{\tif}{\texttt{if}}
\newcommand{\tthen}{\texttt{then}}
\newcommand{\telse}{\texttt{else}}
\newcommand{\LR}{\mathit{LR}}
\newcommand{\LRP}{\mathit{LRP}}
\newcommand{\CHF}{\mathit{CHF}}
\newcommand{\CH}{{\mathit{CH}}}
\newcommand{\T}{\mathbb{T}}
\newcommand{\Red}{\mathit{Red}}
\newcommand{\cselambda}{ \mathrm{cse}\lambda}
\newcommand{\LV}{\mathit{LV}}
\newcommand{\calP}{\mathcal{P}}
\newcommand{\calC}{\mathcal{C}}
\newcommand{\CHFST}{\error CHFST in diesem Papier nicht, da callbyname CHF\ensuremath{\mathit{CHF^*}}}
\newcommand{\tDummyChannel}{\text{{\normalfont\ttfamily DummyChannel}}}
\newcommand{\tChannel}{\text{{\normalfont\ttfamily Channel}}}
\newcommand{\tChan}{\text{{\normalfont\ttfamily Chan}}}
\newcommand{\LHatCtxt}{\textit{$\widehat{\LCtxt}$}}
\newcommand{\sr}{\xrightarrow{sr}}
\newcommand{\ia}{\xrightarrow{ia}}
\newcommand{\icode}[1]{\textit{#1}}
\newcommand{\tgetsend}{\text{{\normalfont\ttfamily getsend}}}
\newcommand{\tgetcheck}{\text{{\normalfont\ttfamily getcheck}}}
\newcommand{\tStop}{\text{{\normalfont\ttfamily Stop}}}
\newcommand{\tstop}{\text{{\normalfont\ttfamily stop}}}
\newcommand{\bfdo}{\ensuremath{\mathbf{do}}}
\newcommand{\casepf}{\,\texttt{->}\,}
\newcommand{\tMVar}{\texttt{MVar}}
\newcommand{\tsend}{\texttt{send}}
\newcommand{\tsendx}{\texttt{sendx}}
\newcommand{\tsendz}{\texttt{sendz}}
\newcommand{\tcheck}{\texttt{check}}
\newcommand{\tcheckx}{\texttt{checkx}}
\newcommand{\tcheckz}{\texttt{checkz}}
\newcommand{\icheckx}{\mathit{checkx}}
\newcommand{\isendx}{\mathit{sendx}}
\newcommand{\istop}{{\ensuremath{\mathit{stop}}}}    
\newcommand{\tlet}{\texttt{let}}
\newcommand{\TODO}[1]{~\\ {\bf TODO:    #1}~\\ }
\newcommand{\pistop}{\Pi_{\tStop}}
\newcommand{\pistopCc}{\Pi_{\tStop,C}^c}
\newcommand{\pistopC}{\Pi_{\tStop,C}}
\newcommand{\picalc}{\Pi}
 \newcommand{\UTHREAD}[1]{{\Leftarrow}\,{#1}}
\newcommand{\UMTHREAD}[1]{{\xLongleftarrow{\!\!\text{\normalfont\sffamily main}\!\!}}\,{#1}}
\newcommand{\Coutsigma}{C_{\mathit{out}}^\sigma}
\newcommand{\Couttau}{C_{\mathit{out}}^\tau}

\newcommand{\reportOnly}[1]{}
\newcommand{\paperOnly}[1]{{#1}} 
\maketitle

\begin{abstract}
Comparison of concurrent programming languages and  correctness of program transformations in concurrency are the focus of this research.
As criterion we use contextual semantics adapted to concurrency, where may- as well as should-convergence are observed.
We investigate the relation between the synchronous pi-calculus and a core language of Concurrent Haskell (CH). 
The contextual semantics is on the one hand forgiving
with respect to the details of the operational semantics, and on the other hand
implies strong requirements for the interplay between the processes after translation.  
Our result is that CH embraces the synchronous pi-calculus.   
Our main task is to find and prove correctness of encodings of pi-calculus channels by CH's concurrency primitives, which are  MVars. 
They behave like (blocking) 1-place buffers modelling the shared-memory. 
The first developed translation uses an extra private MVar
for every communication.
We also automatically generate 
and check potentially correct translations that reuse the MVars 
where
one MVar contains the message and two additional MVars for synchronization are
used to model the synchronized communication of a single channel in the pi-calculus.    
Our automated experimental results 
lead to the conjecture that one additional MVar is insufficient.
\end{abstract}
%
%
\section{Introduction}
Our goals are the comparison of programming languages, correctness of transformations, compilation and optimization of programs, 
in particular of concurrent programs.   
We already used the contextual semantics of concurrent (functional) programming languages to 
effectively verify correctness of transformations
\cite{niehren-sabel-schmidt-schauss-schwinghammer:07:entcs,sabel-schmidt-schauss-PPDP:2011,sabel-schmidt-schauss-LICS:12},
also under the premise not to worsen the runtime \cite{schauss-sabel-dallmeyer:18}.   
We propose to test may- and should-convergence in the contextual semantics, since, in particular, it rules out transformations that transform an always successful process into 
a process that  
may run into an error, for example a deadlock.
There are also other notions of program equivalence in the literature, like bisimulation based program equivalences \cite{sangiorgi-walker:01}, however, 
 these tend to take also  implementation specific   
 behavior 
 of the operational semantics into account, 
 whereas  contextual equivalence abstracts from the details of the executions.  

 In \cite{schmidt-schauss-niehren-schwinghammer-sabel-ifip-tcs:08,schmidtschauss-sabel-niehren-schwing-tcs:15} we developed notions of correctness of translations {w.r.t.}~contextual 
semantics,   
and in \cite{schwinghammer-sabel-schmidt-schauss-niehren:09:ml} we applied them in the context of concurrency, but for quite similar source and target languages.
In this paper we translate a synchronous message passing model into 
a shared memory model, namely 
 a synchronous $\pi$-calculus into a core-language of Concurrent Haskell, called $\CH$.

The contextual semantics of concurrent programming languages is a generalization of the extensionality principle of functions. 
The test for a program $P$ is whether
$C[P]$ -- i.e. $P$ plugged into a program context -- 
successfully terminates (converges) or not, which usually means that
the standard reduction sequence ends with a value.
For a concurrent program $P$, we use two observations: {\em may-convergence} ($P\maycon$) -- at least one execution path terminates successfully, 
and {\em should-convergence}\footnote{An alternative 
observation is must-convergence (all execution paths terminate). The advantages of equivalence notions based on may- and should-convergence are invariance under fairness restrictions, 
preservation of deadlock-freedom, 
and equivalence of busy-wait
and wait-until behavior (see \eg~\cite{schwinghammer-sabel-schmidt-schauss-niehren:09:ml}). 
} ($P\mustcon$) -- every intermediate state of a reduction sequence may-converges. 
For two processes $P$ and $Q$, $P \leq_c Q$ holds iff for all contexts $C[\cdot]$: $(C[P]\maycon \implies C[Q]\maycon)$, 
and
$P$ and $Q$ are contextually equivalent, $P \sim_c Q$,  iff $P \leq_c Q$ and $Q \leq_c P$.   
Showing equal expressivity of two (concurrent) calculi by a translation $\tau$ 
requires that may- and should-convergence make sense in each calculus.
Important properties are convergence-equivalence (may- and should-convergencies are preserved and reflected by the translation) and  adequacy 
 (see  \cref{def:adequate-and-fullyabstract}), 
which holds if $\tau(P) \leq_{c,CH} \tau(Q)$ implies  $P \leq_{c,\pi} Q$,  for all $\pi$-calculus processes $P,Q$. 
Full-abstraction, 
i.e. $\forall P,Q: \tau(P) \leq_c \tau(Q)$ iff  $P \leq_c  Q$, 
only holds if the two calculi are more or less the same.

\paragraph*{Source and Target Calculi.}
The well-known $\pi$-calculus \cite{milner-parrow-walker:92,milner-pi-calc:99,sangiorgi-walker:01} is a minimal model for \emph{mobile and concurrent processes}. 
Dataflow is expressed
by passing messages between them via named channels, where messages are channel names. Processes and links between processes can be dynamically created and removed which makes processes mobile.
The interest in the $\pi$-calculus is not only due to the fact that it is used and extended for various  applications, like reasoning about cryptographic protocols \cite{abadi-gordon:97},
applications in molecular biology \cite{priami:95}, and distributed computing \cite{laneve-join:96,fournet-gonthier:2000}.
The $\pi$-calculus also permits the study of intrinsic principles and semantics of concurrency
and the inherent nondeterministic behavior of mobile 
and communicating processes.
We investigate a variant of the $\pi$-calculus which is the synchronous $\pi$-calculus with replication, but without sums, matching operators, or recursion. 
To observe termination of a process, the calculus has a constant $\tStop$ which signals successful termination.

\ignore{   das traegt nix zum Papier bei: 
One important reason why we use Concurrent Haskell as a target language,
and not sequential Haskell, is that for a correct translation, the target language must be able to perform a parallel convergence test  \cite{pucella-pananganden:01,plotkin:77}, since the $\pi$-calculus has one (\ie~by the context $[\cdot] \PAR [\cdot]$).
Neither sequential Haskell nor extensions by an erratic-choice have such a tester. 
 However,  Concurrent Haskell \cite{peyton-gordon-finne:96}  has a parallel-convergence testing context.
 }
%
The calculus $\CH$, 
 a core language of Concurrent Haskell, 
is a process calculus where threads evaluate expressions from a lambda calculus extended by 
data constructors, case-expressions, recursive let-expressions, 
and Haskell's seq-operator. Also monadic operations (sequencing and creating threads) are available. The shared memory is modelled by MVars 
(mutable variables) which are one-place buffers that can be either filled or empty. The operation $\ttakeMVar$ tries to empty a filled MVar and blocks if the MVar is already empty. The operation $\tputMVar$ tries to fill an empty MVar and blocks if the MVar is already filled. 
The calculus $\CH$  is a variant (or a subcalculus) 
of the calculus $\CHF$ \cite{sabel-schmidt-schauss-PPDP:2011,sabel-schmidt-schauss-LICS:12} which 
extends Concurrent Haskell with futures.
A technical 
  advantage of this approach is that we can reuse studies and results on the contextual semantics of $\CHF$ also for $\CH$.

  
 
  
\paragraph*{Details and Variations of the Translation.}  One main issue for a correct translation from $\pi$-processes to $\CH$-programs is to 
encode the 
  synchronous communication of the $\pi$-calculus. 
  The problem is that the  MVars in $\CH$  have an asynchronous behavior (communication has to be implemented in two steps: 
  the sender puts the message into an MVar, which is later taken by the receiver).   
  To implement synchronous communication,  the weaker synchronisation property of MVars has to be exploited, 
    where we must be aware of the potential interferences of the executions of other translated communications on the same channel.
  %
  The task of finding such translations is reminiscent of the channel-encoding used in \cite{peyton-gordon-finne:96}, but, however, there
  an asynchronous channel is implemented while we look for synchronous communication.
    
  We provide a translation $\tau_0$ which uses a private MVar per channel and per communication to ensure that no other process can interfere with the interaction.
  A similar idea was used in \cite{Honda:1991,boudol:1992} for keeping channel names private in a different scenario 
  (see \cite{Glabbeek18,GlabbeekGLM19} for recent treatments of these encodings).  
  We 
  prove that the translation $\tau_0$ is correct. 
  Since we are also interested in simpler translations,    
  we looked for correct translations with a fixed and static number of MVars per channel in the $\pi$-calculus.
  Since this task is too complex and error-prone for hand-crafting, we automated it by implementing a tool to rule
  out incorrect translations\footnote{The tool and some output generated by the tool are available via \href{https://gitlab.com/davidsabel/refute-pi}{https://gitlab.com/davidsabel/refute-pi}.}. 
  Thereby we fix the MVars used for every channel:
  a single MVar for  
  exchanging the channel-name and perhaps several additional MVars of unit type to perform checks whether the message was sent or received (we call them check-MVars, 
 they behave like binary semaphores that are additionally blocking for signal-operations on an unlocked semaphore).
  The outcomes of our  automated search are: a further correct translation that uses two check-MVars, where one is used as a mutex between all senders 
  or receivers on one channel, and
  further correct translations using three additional MVars where the filling and emptying operations for each MVar need not come from    
  the same sender or receiver. The experiments lead to the conjecture that there is no translation using only one check-MVar.


\paragraph*{Results.}
 Our novel result is convergence-equivalence and adequacy of the open translation $\tau$ (\cref{thm:may-must-equivalence,thm:adequate}), translating  the $\pi$-calculus into $\CH$. 
   The comparison of the $\pi$-calculus with a concurrent programming language 
   (here $\CH$) 
   using contextual semantics
   for may- and should-convergence in both calculi exhibits that the $\pi$-calculus is embeddable in $\CH$ 
         where we can prove that  the semantical properties of interest are kept.
   The adaptation of the adequacy and full abstraction notions 
   (\cref{def:adequate-and-fullyabstract}) for open processes is a helpful technical extension of our work in 
   \cite{schmidt-schauss-niehren-schwinghammer-sabel-ifip-tcs:08,schmidtschauss-sabel-niehren-schwing-tcs:15}.
    
 We further define a general formalism for the representation of translations with global names    
   and analyze different classes of such translations using an automated tool. 
   In particular, we show correctness of two particular translations in \cref{thm:tauinduced1-correct,thm:tauinduced2-correct}.   
 %
   The discovered correct translations look quite simple and their correctness seems to be quite intuitive. However, our experience is that searching for correct translations 
   is quite hard, since there are apparently correct (and simple) translations which were  wrong. Our automated tool helped us 
   to rule out wrong translations and to find potentially correct ones.
\paragraph*{Discussion of Related Work on Characterizing Encodings.}
There are five criteria for valid translations resp.~encodings proposed and discussed in \cite{Gorla:10,GlabbeekGLM19},
which mainly restrict the translations w.r.t. language syntax and reduction semantics of the source and target language, called: compositionality, name invariance, 
operational correspondence, divergence reflection and success sensitiveness. Compositionality and name invariance restrict the syntactic form of the translated processes;
  operational correspondence means that the transitive closure of the reduction relation is transported by the translation, modulo the syntactic equivalence;
  and divergence reflection and success sensitiveness are conditions on the semantics.  
  
 In our approach, we define semantical congruence (and precongruence) relations on the source and target language. 
Thus the first two conditions  are not part of our notion of contextual equivalence, however, may be used as restrictions in showing non-encodability. 
We also omit the third condition and only use stronger variants of the fourth and fifth condition.
Convergence equivalence as a tool for finding out may-and should-convergence is our replacement of Gorla's divergence reflection and success sensitiveness.
We do not define an infinite reduction sequence as an error, which has as nice consequence that synchronization could be implemented by busy-wait techniques.


\paragraph*{Further Related Work.}
Encodings of synchronous communication by asynchronous communication using a private name mechanism are given in  \cite{Honda:1991,boudol:1992} for (variants of the)
$\pi$-calculus. Our idea of the translation $\tau_0$ similarly
uses a private MVar to encode the channel based communication,
but our setting is different, since our
target language is Concurrent Haskell. 
Encodings between $\pi$-calculi with synchronous and with asynchronous communication
were, for instance, already considered in
\cite{Honda:1991,boudol:1992,Palamidessi03,Palamidessi:97} where encodability results are obtained for the $\pi$-calculus without sums \cite{Honda:1991,boudol:1992},
while in \cite{Palamidessi:97,Palamidessi03} the expressive power of synchronous and asynchronous communication in the $\pi$-calculus with \emph{mixed sums} was compared and 
non-encodability is a main result.
Translations of the $\pi$-calculus into programming calculi and logical systems 
are given in \cite{banach-balazs-papadopoulos:95}, 
where a translation into a graph-rewriting calculus is given and  soundness and completeness \wrt the operational 
behavior is proved. 
The article \cite{yang-ramakrishnan-smolka:2004} shows a translation and a proof that the $\pi$-calculus is exactly operationally represented.
There are several works on session types which are related to the $\pi$-calculus, \eg,
 \citet{orchard-yoshida:16} studies encodings from a session calculus into PCF extended by concurrency and effects and also an embedding 
 in the other direction, mapping PCF extended by effects into a session calculus. The result is a (strong) operational correspondence 
 between the two calculi.
 In \cite{cano-et-al:2017} an embedding of a session $\pi$-calculus into ReactiveML is given and operational correspondence between the two languages is shown.
%
%
%
Encodings of CML-events in Concurrent Haskell using MVars are published in \cite{Russell01,Chaudhuri09}. 
This approach is more high-level than ours (since it considers events, while we focus the plain $\pi$-calculus).   In \cite{Chaudhuri09} correctness of a distributed protocol for 
selective-communication {w.r.t.}~an excerpt of CML is shown and a correct implementation of  the protocol in the $\pi$-calculus is given. The protocol is implemented in 
Concurrent-Haskell, but no correctness of this part is shown,
since \cite{Chaudhuri09} focuses to show that CML-events are implementable in languages with first-order message-passing which is different from our focus 
 (translating the $\pi$-calculus into $\CH$).


\paragraph*{Outline.}
    We introduce the source and target calculi in \cref{sec:pi-calc,section:chf}, 
    the translation using private names in \cref{sec:translation}, and
%
    in \cref{sect:global-translations} we treat translations with global names.
    We conclude and discuss future work in \cref{sec:concl}. 
   Due to space constraints  most proofs are in the technical report \cite{schmidt-schauss-sabel:frank-60:19}.
    
  
\section{\texorpdfstring{The $\pi$-Calculus with Stop}{The pi-Calculus with Stop}}\label{sec:pi-calc}
\begin{figure*}[t]
\begin{minipage}{.58\textwidth}
$\begin{array}{@{}r@{\,}c@{\,}l@{}}
P,Q \in \pistop &::=& \nu x.P \mid \outputxy{x}{y}.P \mid \inputxy{x}{y}.P \mid  {!P} \mid P \PAR Q \mid 0 \mid  \tStop
\\
 C\in\pistopC &::=& [\cdot] \mid \bar{x}(y).C \mid x(y).C \mid C \PAR P \mid P \PAR C \mid !C \mid \nu x.C
 \\
 \PCtxt \in \PCtxts_\pi  &::=& [\cdot] \mid \PCtxt  \PAR P \mid P \PAR \PCtxt  \mid \nu x.\PCtxt.
\end{array}$
 \caption{Syntax of processes $\pistop$, process contexts $\pistopC$ and reduction contexts $\PCtxts_\pi$
 where $x,y$ are names.
}\label{fig:syntax-pistop}
\vspace*{1mm}
\textbf{Interaction rule:}
(ia) $\inputxy{x}{y}.P \PAR \outputxy{x}{z}.Q \ia P[z/y] \PAR Q$


\begin{tabular}{@{}l@{~}p{.82\textwidth}@{}}\textbf{Closure:}&If $P\equiv \PCtxt[P'],P'\ia Q', \PCtxt[Q'] \equiv Q, \text{ and }\PCtxt \in \PCtxts_\pi$ then $P\sr Q$
\end{tabular}
\caption{Reduction rule and standard reduction in $\pistop$\label{fig:sr-pi}}
\end{minipage}\hfill\begin{minipage}{.39\textwidth}\centering
$\begin{array}{@{}r@{~}c@{~}l@{}}
P &\equiv& Q,~\text{if } P =_{\alpha} Q\\
P \PAR (Q \PAR R) &\equiv& (P \PAR Q) \PAR R\\
\nu x.(P \PAR Q) &\equiv& P \PAR \nu x.Q, \text{if }x \notin \FN(P)
\\
P \PAR 0 &\equiv& P \\
\nu x.0 &\equiv& 0 \\
\nu x.\tStop &\equiv& \tStop \\
\nu x. \nu y.P &\equiv& \nu y. \nu  x.P \\
P \PAR Q &\equiv& Q \PAR P \\
!P &\equiv& P \PAR !P
\end{array}
$
\caption{Structural congruence in $\pistop$\label{fig:congrpi}}
\end{minipage}
\end{figure*}


We explain the synchronous $\pi$-calculus \cite{milner-parrow-walker:92,milner-pi-calc:99,sangiorgi-walker:01} without sums, with replication, 
extended with a constant~\tStop~\cite{sabel-schmidt-schauss-pistop:2015}, that signals successful termination.
The $\pi$-calculus with~\tStop~and the $\pi$-calculus without~\tStop~but with so-called barbed convergences \cite{sangiorgi-walker:2001} are equivalent \wrt~contextual semantics
\cite{schmidt-schauss-sabel:frank-60:19}.
Thus, adding the constant $\tStop$ is not essential, 
but our arguments are
easier to explain with $\tStop$.

\begin{definition}[Calculus $\pistop$] 
 Let $\mathcal{N}$ be a countable set of 
   (channel) names. 
The syntax of \emph{processes} is shown in \cref{fig:syntax-pistop}.
Name restriction  $\nu x.P$ restricts the scope of name $x$ to process $P$,
$P \PAR Q$ is the parallel composition of $P$ and $Q$,
the process $\outputxy{x}{y}.P$ waits on channel $x$ to output $y$ over channel $x$ and then becoming $P$, the process $\inputxy{x}{y}.P$ waits on channel $x$ to receive input, 
after receiving the input $z$, the process turns into $P[z/y]$ 
(where $P[z/y]$ is the substitution of all free occurrences of name $y$ by name $z$ in process $P$), 
the process $!P$ is the replication of process $P$, \ie{} it behaves like an infinite parallel combination of process $P$ with itself, the process $0$  
is the silent process, and \tStop{} is a process constant that signals success.
We sometimes write $\inputxy{x}{y}$ instead of $\inputxy{x}{y}.0$ as well as $\outputxy{x}{y}$ instead of $\outputxy{x}{y}.0$.

Free names  $\FN(P)$, bound names $\BN(P)$, and $\alpha$-equivalence $=_\alpha$ in $\pistop$
are as usual in the $\pi$-calculus. A process $P$ is {\em closed} if $\FN(P)=\emptyset$. Let $\pistop^c$ be the closed processes in $\pistop$.
Structural congruence $\equiv$ is the least congruence  satisfying the laws shown in \cref{fig:congrpi}.
{\em Process contexts} $\pistopC$ and {\em reduction contexts} $\PCtxts_\pi$ are
defined in \cref{fig:syntax-pistop}.
Let $C[P]$ be the substitution of the hole $[\cdot]$ in $C$ by $P$. 
The reduction rule $\ia$ performs {\em interaction} and 
standard reduction $\xrightarrow{sr}$ is its closure w.r.t. reduction contexts and $\equiv$ (see \cref{fig:sr-pi}).
Let  $\xrightarrow{sr,n}$ denote $n$ $\xrightarrow{sr}$-reductions and  $\xrightarrow{sr,*}$ denotes the reflexive-transitive closure of  
$\xrightarrow{sr}$.
A process $P \in \Pi_{\tStop}$ is {\em successful}, if  $P \equiv \PCtxt[\tStop]$ for some $\PCtxt\in\PCtxts_\pi$.
\end{definition}

\begin{remark}
We do not include ``new'' laws for structural congruences on the constant $\tStop$, like $\tStop \PAR \tStop$ equals $\tStop$, 
since this would require to re-develop a lot of theory known from the $\pi$-calculus without $\tStop$.
In our view, $\tStop$ is a mechanism 
for a notion of success  
that can be easily replaced by other similar notions (e.g.~observing an open input or output as in barbed testing).
However, it is easy to prove those  equations \emph{on the semantic level}.
(\ie~\wrt~$\sim_c$ as defined below in \cref{def:pi-simc}).
\end{remark}

As an example for a reduction sequence, consider sending name $y$ over channel $x$ and then sending name $u$ over channel $y$:
$
  (\inputxy{x}{z}.\outputxy{z}{u}.0 \PAR \outputxy{x}{y}.\inputxy{y}{x}.0)
\ia 
  (\outputxy{z}{u}.0[y/z] \PAR \inputxy{y}{x}.0)  \equiv 
    (\inputxy{y}{x}.0 \PAR \outputxy{y}{u}.0)  
\ia
  %
  (0 \PAR 0) \equiv 
  0.
$


%
 
For the semantics of processes, we observe whether
 standard reductions successfully terminate or not. 
Since reduction is nondeterministic,
we test whether there exists a successful reduction sequence (may-convergence), and we test whether all reduction possibilities are 
successful (should-convergence).
\begin{definition}
Let $P$ be a $\pistop$-process. We say process $P$ is {\em  may-convergent} and write $P{\maycon}$, if and only if there is a successful process $P'$ with
$P \xrightarrow{sr,*} P'$. We say $P$ is {\em  should-convergent} and write $P{\mustcon}$ if and only if for all $P'$: $P \xrightarrow{sr,*} P'$ implies $P'{\maycon}$.
If $P$ is not may-convergent, then $P$ we say is {\em must-divergent} (written $P{\Uparrow}$). If $P$ is not should-convergent, then we say it is {\em may-divergent} (written $P{\uparrow}$). 
\end{definition}

\begin{example}\label{ex:bsp1}
  The process $P:=\nu x,y.(\inputxy{x}{z}.0 \mid \outputxy{x}{y}.\tStop)$ is may-convergent ($P{\maycon}$) and should-convergent ($P{\mustcon}$), since
$P  \sr 0 \mid \tStop$ is the only $\xrightarrow{sr}$-sequence for  $P$.
The process  $P' :=  \nu x,y.(\inputxy{x}{z}.0 \mid \outputxy{x}{y}.0)$  is may- and must-divergent (\ie~$P'{\uparrow}$ and $P'{\Uparrow}$), since $P' \xrightarrow{sr} 0$ is the only $\xrightarrow{sr}$-sequence for $P'$.
\\
For 
$P'' := \nu x,y.(\outputxy{x}{y}.0 \mid \inputxy{x}{z}.\tStop \mid \inputxy{x}{z}.0)$,
we have    $P'' \sr \nu x,y.(\tStop \mid \inputxy{x}{z}.0)$
 and
    $P'' \sr \nu x,y.\inputxy{x}{z}.\tStop$.
 The first result is successful, and the second result is not successful. 
 Hence,  for $P''$ we have
 $P''{\maycon}$ and $P''{\maydiv}$.
\end{example}
Should-convergence implies may-convergence, and  must-divergence implies may-divergence.

\begin{definition}\label{def:pi-simc}
  For $P, Q \in \Pi_{\tStop}$ and observation $\xi \in \{\maycon,\mustcon,\uparrow,\Uparrow\}$,  we define
   \mbox{$P\leq_\xi Q$} iff   $P\xi \implies Q\xi.$
   The $\xi$-contextual preorders $\leq_{c,\xi}$ and then
   $\xi$-contextual equivalences $\sim_{c,\xi}$  are defined as
    $$P\leq_{c,\xi} Q \text{ iff } \forall C \in \pistopC : C[P] \leq_\xi C[Q]
     \text{~~~and~~~}
    P\sim_{c,\xi}Q \text{ iff } P \leq_{c,\xi} Q \wedge Q \leq_{c,\xi} P
    $$
\emph{Contextual equivalence} of $\Pi_{\tStop}$-processes is defined as
$
    P\sim_{c}Q \text{ iff } P \sim_{c,\maycon} Q \wedge P \sim_{c,\mustcon} Q.
$
\end{definition}
\reportOnly{
Inspection of the reduction contexts and the $\ia$-reduction shows:
\begin{lemma}\label{lem:pi-closed-red}
Let $P$ be a $\Pi_{\tStop}$-process s.t.  $\FN(P) \subseteq \{x_1,\ldots,x_n\}$.
\begin{enumerate}
 \item   If $P \xrightarrow{sr,*} P'$ then
 $\nu x_1,\ldots,\nu x_n.P \xrightarrow{sr,*} \nu x_1,\ldots,x_n.P'$.
 \item
 If
 $\nu x_1,\ldots,\nu x_n.P \xrightarrow{sr,*} P'$
 then
 $P' \equiv \nu x_1,\ldots,x_n.P''$ and 
 $P \xrightarrow{sr,*} P''$.
 \end{enumerate}
 \end{lemma}
}
\begin{example}
For $Q := \nu x,y.(\outputxy{x}{y}.0 \mid \inputxy{x}{z}.\tStop \mid \inputxy{x}{z}.0)$,
we have $\tStop \sim_{c,\maycon} Q$ (which can be proved using the methods in \cite{sabel-schmidt-schauss-pistop:2015}), but 
$\tStop \not\sim_{c} Q$, since 
$\tStop\mustcon$ and $Q\maydiv$ and thus $\tStop \not\sim_{c,\mustcon} Q$. Note that ${\leq_{c,\mustcon}} = {\leq_c}$ holds in $\pistop$, since there is a context $C$ such that for all processes $P$: $C[P]\mustcon \iff P \maycon$ (see \cite{sabel-schmidt-schauss-pistop:2015}). For instance, the  equivalence $0 \sim_{c,\mustcon} Q$ does not hold, since $!0\mustdiv$ and $!Q\mustcon$ and thus the context $C=![\cdot]$ distinguishes $0$ and $Q$ w.r.t.~should-convergence.
\end{example}

Contextual preorder and equivalence are (pre)-congruences. 
Contextual preorder remains unchanged if observation is restricted to closing contexts:
\begin{lemma}
Let $\xi\in\{\downarrow,\uparrow,\Downarrow,\Uparrow\}$, $P,Q$ be 
$\Pi_{\tStop}$-processes.
Then $P \leq_{c,\xi} Q$ if, and only if
  $\forall C\in \pistopC$ such that $C[P]$ and $C[Q]$ are closed: $C[P] \leq_{\xi} C[Q]$.
\end{lemma}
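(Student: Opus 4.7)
The direction ``$\Rightarrow$'' is immediate from the definition of $\leq_{c,\xi}$: if $C[P]\leq_\xi C[Q]$ holds for every $C\in\pistopC$, then it holds in particular for those $C$ such that $C[P]$ and $C[Q]$ are closed.

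For the converse, let $C\in\pistopC$ be arbitrary and let $\{x_1,\ldots,x_n\}\supseteq \FN(C[P])\cup\FN(C[Q])$ (a finite set, using that $\pistop$-processes and contexts have only finitely many free names). Define
$$C' \;:=\; \nu x_1.\nu x_2.\ldots\nu x_n.C,$$
which lies in $\pistopC$ because $\nu x.(\cdot)$ is a context constructor. By construction $C'[P]=\nu x_1.\ldots.\nu x_n.C[P]$ and $C'[Q]$ are closed, so the hypothesis gives $C'[P]\leq_\xi C'[Q]$. It therefore suffices to show, for every $R\in\{C[P],C[Q]\}$, that
$$R\,\xi \;\iff\; \nu x_1.\ldots.\nu x_n.R \,\xi \qquad \text{for each }\xi\in\{\downarrow,\uparrow,\Downarrow,\Uparrow\}.$$

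The plan for this equivalence is to use the closed-reduction lemma (\cref{lem:pi-closed-red}), which says that $R\sr^* R'$ iff $\nu\vec{x}.R\sr^* \nu\vec{x}.R''$ for some $R''\equiv R'$ (up to $\equiv$). I first observe that a process of the form $\nu\vec{x}.R$ is successful iff $R$ is successful: if $R\equiv \PCtxt[\tStop]$ then $\nu\vec{x}.R\equiv(\nu\vec{x}.\PCtxt)[\tStop]$ which is again a reduction-context filled with $\tStop$; conversely, if $\nu\vec{x}.R\equiv \PCtxt'[\tStop]$, then using the structural-congruence laws for $\nu$ (in particular $\nu x.(P\PAR Q)\equiv P\PAR\nu x.Q$ when $x\notin\FN(P)$ and the laws commuting $\nu$-binders), the $\tStop$-occurrence can be pushed under the $\vec{x}$-binders, exhibiting $R$ itself as successful. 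From this together with \cref{lem:pi-closed-red} one obtains: $R\maycon$ iff $\nu\vec{x}.R\maycon$, and similarly for the other three observations by combining may-convergence with the quantification structure of $\Downarrow,\uparrow,\Uparrow$.

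Applying this equivalence and the closed-case hypothesis,
$$C[P]\,\xi \iff C'[P]\,\xi \implies C'[Q]\,\xi \iff C[Q]\,\xi,$$
so $C[P]\leq_\xi C[Q]$. Since $C$ was arbitrary, $P\leq_{c,\xi} Q$. The only nontrivial step is the transport of each observation $\xi$ across the closing $\nu$-binders; the $\maycon$ and $\maydiv$ cases follow directly from \cref{lem:pi-closed-red} plus the structural-congruence argument for $\tStop$, and the $\mustcon$ and $\mustdiv$ cases then follow because both sides of \cref{lem:pi-closed-red} put the reductions of $R$ and $\nu\vec{x}.R$ in bijection modulo $\equiv$, so ``every descendant may-converges'' transfers across the binders as well.
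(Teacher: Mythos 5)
Your proof is correct and follows essentially the same route as the paper: the non-trivial direction is handled by closing an arbitrary context with $\nu$-binders over the finitely many free names, invoking the hypothesis on the resulting closed instances, and transporting each observation $\xi$ across the added binders via \cref{lem:pi-closed-red} together with the invariance of success under top-level $\nu$-restriction. The only cosmetic difference is that you treat all four observations uniformly through the equivalence $R\,\xi \iff \nu\vec{x}.R\,\xi$, whereas the paper argues $\maycon$ and $\uparrow$ directly and obtains $\Uparrow$ and $\Downarrow$ by complementation; you also make explicit the ``success is preserved under closing binders'' step that the paper only asserts.
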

\reportOnly{\begin{proof}
One direction is trivial.
For the other direction, we first consider $\xi ={\downarrow}$. Assume that 
  $\forall C\in \pistopC$ such that $C[P]$ and $C[Q]$ are closed: $C[P] \leq_{\maycon} C[Q]$.
  Let $C$ be an arbitrary context such that $C[P]\maycon$ and let 
  $\FN(C[P]\PAR C[Q]) = \{x_1,\ldots,x_n\}$. 
  \Cref{lem:pi-closed-red} 
    shows $\nu x_1,\ldots,x_n.C[P]\maycon$. The precondition shows $\nu x_1,\ldots,\nu x_n.C[Q]\maycon$.
  \Cref{lem:pi-closed-red}  shows $C[Q]\maycon$.
  
The case $\xi = {\Uparrow}$ holds, since ${\Uparrow} = {\neg{ \downarrow}}$: If
  $\forall C\in \pistopC$ such that $C[P]$ and $C[Q]$ are closed: $C[P] \leq_{\Uparrow} C[Q]$.
  then this shows $C[Q] \leq_{c,\downarrow} C[P]$ which is equivalent to $C[P] \leq_{c,\Uparrow} C[Q]$.
  
For $\xi = {\uparrow}$, assume that 
  $\forall C\in \pistopC$ such that $C[P]$ and $C[Q]$ are closed: $C[P] \leq_{\uparrow} C[Q]$.
  Let $C$ be an arbitrary context such that $C[P]\uparrow$ and let 
  $\FN(C[P]|C[Q]) = \{x_1,\ldots,x_n\}$. 
  \Cref{lem:pi-closed-red} and since adding or removing $\nu$-binders does not change success nor must-divergence we have
 $\nu x_1,\ldots,x_n.C[P]\uparrow$. Now the precondition shows $\nu x_1,\ldots,\nu x_n.C[Q]\uparrow$, and
  \Cref{lem:pi-closed-red} and since adding or removing $\nu$-binders does not change success nor must-divergence
  shows $C[Q] \uparrow$.
  Finally, the case $\xi={\Downarrow}$ follows by symmetry and since $P\uparrow \iff \neg(P \Downarrow)$.
  \end{proof}
}

\ignore{
\subsection{Replacing Structural Congruence by Structural Rewriting}
Since applying structural congruence is highly nondeterministic, 
 and in order to facilitate reasoning on standard reduction sequences, we define a more restrictive 
use of the congruence laws which is still nondeterministic, but only applies the laws on the surface of the process.
\begin{definition}\label{def:dsr}
Let $\xrightarrow{dsc}$ be the union of the following rules, 
where $\PCtxt\in\PCtxts_\pi$:
$$
\begin{array}{@{}l@{~}l@{~}l@{~}l@{}}
\text{(assocl)} & \PCtxt[P \PAR (Q \PAR R)] \to \PCtxt[(P \PAR Q) \PAR R]
&
\text{(assocr)} &\PCtxt[(P \PAR Q) \PAR R] \to \PCtxt[P \PAR (Q \PAR R)]
\\[.5ex]
\multicolumn{2}{@{}l}{\text{(commute)}~\PCtxt[P \PAR Q] \to \PCtxt[Q \PAR P]}
&\multicolumn{2}{@{}l}{\text{(replunfold)}~\PCtxt[!P] \to \PCtxt[P \PAR !P]}
\\[.5ex]
\text{(nuup1)} & \PCtxt[(\nu z.P)\PAR Q] \to \PCtxt[\nu z.(P \PAR Q)],
&\text{(nuup2)} & \PCtxt[\nu x.\nu z.P] \to \PCtxt[\nu z.\nu x.P],
\\
&
\text{if $z$ does not occur free in $Q$}
&&
\text{if $x \not= z$}\\
\end{array}
$$
Let $\xrightarrow{dia}$ be the closure of $\xrightarrow{ia}$ by reduction contexts $\PCtxts_\pi$ and let $\xrightarrow{dsr}$ be defined as 
the composition $\xrightarrow{dsc,*}\cdot\xrightarrow{dia}\cdot\xrightarrow{dsc,*}$.
\end{definition}

We omit the proof of the following equivalences, but it can be constructed analogous to the proof given in 
\paperOnly{\cite{sabel-paper:OASIcs:2014:4582}}\reportOnly{\cite{sabel-report:OASIcs:2014:4582}} 
for barbed may- and should-testing.
The theorem  allows us to restrict standard reduction to  $\xrightarrow{dsr}$-reduction when reasoning on reduction sequences 
that witness may-convergence or may-divergence, resp.
\begin{theorem}\label{theo:pi-dsr} For all processes $P\in\pistop$ the following holds:\\
1. $P\maycon$ iff $P \xrightarrow{dsr,*} \PCtxt[\tStop]$.
2. $P\maydiv$ iff $\exists P'$ such that $P \xrightarrow{dsr,*} P'$ and $P'\mustdiv$.
\end{theorem}
}
\section{\texorpdfstring{The Process Calculus $\CH$}{The Process Calculus CHF}}\label{section:chf}    
\begin{figure*}[t]
$\begin{array}{@{}r@{\,}l@{\,}l@{\,}l@{}}
P \in \ProcCH &::= &(P_1\PAR P_2) 
                     \OR  \UTHREAD{e}  
                     \OR     \NEW x . P   
                     \OR     \MVAR{x}{e}  
                     \OR     \EMPTYMVAR{x}   
                     \OR      \SHARE{x}{e}      
\\[.5ex]
e\in\ExprCH &::= & x  \OR \lambda x.e \OR (e_1~e_2) \OR c~e_1\ldots e_{\arity(c)} \OR \tletrec~x_1{=}e_1,\ldots,x_n{=}e_n~\tin~e \OR m 
            \OR \tseq~e_1~e_2\\
            &&
            \OR\tcase_{T}\,e\,\tof\,(c_{T,1}\,x_1 \ldots x_{\arity(c_{T,1})} \casepf e_1) \ldots   (c_{T,|T|}\,x_1 \ldots x_{\arity(c_{T,|T|})} \casepf e_{|T|})
             \\[.5ex]
m \in \MExprCH &::=&  \treturn\,e \OR e\,\tbind\,e' \OR \tforkIO\,e 
                   \OR \ttakeMVar\,e \OR {\tt newMVar}\,e \OR \tputMVar\,e\,e'
\\[.5ex]                   
\typetau \in\TypCH &::= & {\tt IO}~\typetau ~|~  (T~\typetau_1 \ldots \typetau_n) ~|~ {\tt MVar}~{\typetau} ~|~ \typetau_1 \to \typetau_2
\end{array}
\\
\begin{array}{@{}r@{\,}l@{\,}l@{\,}l@{}}
\PCtxt \in \PCtxtsCH &::=& [\cdot] \OR \PCtxt \PAR P \OR P \PAR \PCtxt \OR \NEW x.\PCtxt 
\\
\MCtxt \in \MCtxtsCH          &::=& [\cdot] \OR \MCtxt \tbind e
\\
\end{array}\hfill
\begin{array}{@{}r@{\,}l@{\,}l@{\,}l@{}}
\ECtxt \in \ECtxtsCH &::=& [\cdot] \OR (\ECtxt\,e) \OR (\tseq\,\ECtxt\,e)\OR (\tcase\,\ECtxt\,\tof\,alts)
\\
\FCtxt \in \FCtxtsCH &::=& \ECtxt \OR (\ttakeMVar\,\ECtxt) \OR (\tputMVar\,\ECtxt\,e) 
\end{array}
$ 
\caption{Syntax of processes, expressions,  types, and context classes of $\CH$}\label{fig:syntax-expr}\label{fig:chf-contexts}

 \vspace*{3mm}

\begin{minipage}{\textwidth}
\centering $
\begin{array}{@{~}l@{~}l@{}}
\multicolumn{2}{@{}l}{\text{\bfseries Functional Evaluation:}}\\
\text{(cpce)}& \UTHREAD{\MCtxt[\FCtxt[x]]}  \PAR x = e \reduce\UTHREAD{\MCtxt[\FCtxt[e]]}  \PAR x = e
\\
\text{(mkbinds)}&\UTHREAD{\MCtxt[\FCtxt[\tletrec\,x_1{=}e_1,\ldots,x_n{=}e_n\,\tin\,e]]} \reduce \NEW x_1\ldots x_n.(\UTHREAD{\MCtxt[\FCtxt[e]]} \PAR x_1 {=} e_1 \PAR\!\ldots\!\PAR x_n {=} e_n)
\\
\text{(beta)}&\UTHREAD{\MCtxt[\FCtxt[((\lambda x.e_1)~e_2)]]} \reduce  \UTHREAD{\MCtxt[\FCtxt[e_1[e_2/x]]]}
\\
\text{(case)}& \UTHREAD{\MCtxt[\FCtxt[\tcase_T~(c~e_1 \ldots e_n)~\tof~\ldots(c~y_1 \ldots y_n \casepf e)\ldots]]}
             \reduce   \UTHREAD{\MCtxt[\FCtxt[e[e_1/y_1,\ldots,e_n/y_n]]]}\\
\text{(seq)}& \UTHREAD{\MCtxt[\FCtxt[(\tseq~v~e)]]} \reduce \UTHREAD{\MCtxt[\FCtxt[e]]} 
\text{~~~~where $v$ is a functional value}
\\[0.5ex]
\multicolumn{2}{@{}l}{\text{\bfseries Monadic Computations:}}\\[0.5ex]
\text{(lunit)} &\UTHREAD{\MCtxt[\treturn~e_1 \tbind~e_2]} \reduce \UTHREAD{\MCtxt[e_2~e_1]}
\\
\text{(tmvar)} &\UTHREAD{\MCtxt[\ttakeMVar~x]} \PAR \MVAR{x}{e}\reduce \UTHREAD{\MCtxt[\treturn~e]} \PAR \EMPTYMVAR{x}
\\
\text{(pmvar)} &\UTHREAD{\MCtxt[\tputMVar~x~e]} \PAR \EMPTYMVAR{x} \reduce \UTHREAD{\MCtxt[\treturn~()]} \PAR \MVAR{x}{e}
\\
\text{(nmvar)} &\UTHREAD{\MCtxt[\tnewMVar~e]}  \reduce \NEW x.(\UTHREAD{\MCtxt[\treturn~x]} \PAR \MVAR{x}{e})
\\
\text{(fork)}  &\UTHREAD{\MCtxt[\tforkIO~e]}  \reduce 
\UTHREAD{\MCtxt[\treturn~()]} \PAR
    \UTHREAD{e}\\
 \multicolumn{2}{@{}l}{\text{{\bfseries Closure w.r.t. $\PCtxt$-contexts and $\equiv$:}~If $P_1 \equiv \PCtxt[P_1']$, $P_2 \equiv \PCtxt[P_2']$, and $P_1' \reduce P_2'$ then $P_1 \reduce P_2$}.}
 \\[.5ex]
 \multicolumn{2}{@{}l}{\text{{\bfseries Capture avoidance:}~We assume capture avoiding reduction for all reductions.}}
 \end{array}
 $
\caption{Standard reduction rules of $\CH$ (call-by-name-version)\label{fig:sr-rules}}
\end{minipage}
\end{figure*}

The calculus $\CH$  (a variant of the language $\CHF$, \cite{sabel-schmidt-schauss-PPDP:2011,sabel-schmidt-schauss-LICS:12})
models a core language of Concurrent Haskell \cite{peyton-gordon-finne:96}.
We assume a partitioned set of {\em data constructors} $c$ where each family represents a type $T$. The data constructors of type $T$ are $c_{T,1},\ldots,c_{T,|T|}$ where each $c_{T,i}$ has 
an arity $\arity(c_{T,i}) \geq 0$.      
We assume that there is a type {\tt ()} with data constructor {\tt ()},  a type {\tt Bool} with  constructors {\tt True}, {\tt False}, a type {\tt List} with constructors {\tt Nil} and {\tt :} (written infix\reportOnly{ as in Haskell}), and a type {\tt Pair} with a constructor $(,)$ written $(a,b)$.

Processes $P \in \ProcCH$ in $\CH$ 
have expressions $e\in\ExprCH$  as subterms.
See 
\cref{fig:syntax-expr}
where $u,w,x,y,z$ are variables from an infinite set $\Var$.
Processes are formed
by parallel composition  
``$\PAR$''. The $\NEW$-binder restricts the scope of a variable. A concurrent thread $\UTHREAD{e}$ evaluates $e$.
In a process there is (at most one) unique distinguished thread,
called  {\em main thread}, written as $\UMTHREAD{e}$. 
MVars are mutable variables which are empty or filled. A thread blocks if it wants to fill a filled MVar $\MVAR{x}{e}$ or empty 
an empty MVar $\EMPTYMVAR{x}$. Here $x$ is called the {\em name of the MVar}. 
Bindings $\SHARE{x}{e}$ 
model the global heap, 
of shared expressions, 
where 
$x$ is called a {\em binding variable}.
If  $x$ is a name of an MVar or a binding variable, then $x$ is called an {\em introduced variable}.%
\reportOnly{An introduced variable is visible to the whole process unless its scope is restricted by a $\NEW$-binder,
  \ie\ in $Q \PAR \NEW x.P$ the scope of introduced variable 
 $x$ is process $P$.}
\paperOnly{In $Q \PAR \NEW x.P$ the scope of
$x$ is $P$.}
A process is {\em well-formed}, if all introduced variables  
are pairwise distinct and there exists at most one main thread $\UMTHREAD{e}$. 

Expressions $\ExprCH$ consist of functional
and monadic expressions ${\MExprCH}$.
Functional expressions are variables, {\em abstractions} 
$\lambda x.e$, {\em applications} $(e_1~e_2)$,
{\em $\tseq$-expressions} $(\tseq~e_1~e_2)$,
{\em constructor applications} $(c~e_1~\ldots~e_{\arity(c)})$,
{\em $\tletrec$-expressions} $(\tletrec~x_1=e_1,\ldots,x_n=e_n~\tin~e)$, and
{\em \tcase$_T$-expressions} for every type $T$.
We abbreviate \tcase-expressions as $\tcase_T~e~\tof~alts$ where $alts$ are the {\em \tcase-alternatives} such that there is exactly one alternative $(c_{T,i}~x_1 \ldots x_{\arity(c_{T,i})} \casepf e_{i})$ for every constructor $c_{T,i}$ of type $T$,
where  $x_1,\ldots,x_{\arity(c_{T,i})}$ (occurring in the {\em pattern} $c_{T,i}~x_1 \ldots x_{\arity(c_{T,i})}$) 
are pairwise distinct variables that become bound with scope $e_{i}$.
We often omit the type index $T$ in $\tcase_T$.
In $\tletrec~x_1=e_1,\ldots,$ $x_n=e_n$ $\tin~e$ the variables $x_1,\ldots,x_n$ are pairwise distinct and the bindings $x_i = e_i$ are recursive, \ie\ the scope of
$x_i$ is $e_1,\ldots,e_n$ and $e$.
{\em Monadic operators} ${\tt newMVar}$, ${\ttakeMVar}$, and ${\tputMVar}$ are used to create, to empty and to fill MVars,
the ``bind''-operator {$\tbind$} implements
sequential composition of IO-operations, 
the {\tt forkIO}-operator performs thread creation, 
and {\tt return} lifts expressions into the monad.

 {\em Monadic values} are
\mbox{$\tnewMVar\,e$},
\mbox{$\ttakeMVar\,e$},
\mbox{$\tputMVar\,e_1\,e_2$},
\mbox{$\treturn\,e$}, 
\mbox{$e_1\,\tbind\,e_2$},
or 
\mbox{$\tforkIO\,e$}.
{\em Functional values} are abstractions and constructor applications.
A {\em value} is a functional or a monadic value.

Abstractions, $\tletrec$-ex\-pres\-sions, $\tcase$-alternatives, 
and   $\NEW x. P$ introduce variable binders.
This induces bound and free variables (dentoted by $\FV(\cdot)$), $\alpha$-renaming, and $\alpha$-equivalence $=_{\alpha}$.
If $\FV(P)= \emptyset$, then we call process $P$ \emph{closed}.
We assume the {\em distinct variable convention}: free variables are distinct from bound variables, and bound variables are pairwise distinct.  
We assume that $\alpha$-renaming is applied to obey this convention.
Structural congruence $\equiv$ of $\CH$-processes is the least congruence satisfying the laws 
$P_1 \PAR P_2 \equiv P_2 \PAR P_1$,
$(P_1 \PAR P_2) \PAR P_3 \equiv  P_1 \PAR (P_2 \PAR P_3)$,
$\NEW x_1.\NEW x_2.P  \equiv  \NEW x_2.\NEW x_1.P$,
$P_1  \equiv  P_2  \text{ if $P_1 =_{\alpha} P_2$}$,
and
$(\NEW x.P_1)\PAR P_2      \equiv     \NEW x.(P_1 \PAR P_2) \text{ if $x \not\in \FV(P_2)$}$.
%

We assume expressions and processes to be well-typed \wrt~a monomorphic type system: the typing rules are standard (they can be found in \cite{schmidt-schauss-sabel:frank-60:19}). The syntax of types is in \cref{fig:syntax-expr} where
$({\tt IO}~\typetau)$ is the type of  a monadic action with return type $\typetau$, 
$({\tt MVar}~\typetau)$ is the type of an {\tt MVar} with content type $\typetau$, 
and $\typetau_1 \to \typetau_2$ is a function type.
We treat constructors like overloaded constants to use them in a polymorphic way. 

We introduce a call-by-name small-step reduction for $\CH$.
This operational semantics can be shown to be equivalent to a 
call-by-need semantics (see 
%
\cite{sabel-schmidt-schauss-PPDP:2011}
for the calculus $\CHF$). However, the equivalence of the reduction strategies is not important for this
 paper. 
 That is why we do not include it. 

In $\CH$, a context is a process or an expression with a (typed) hole $[\cdot]$.   
We introduce several classes of contexts in \cref{fig:chf-contexts}.
They are used by the reduction rules.
\reportOnly{
On the process level there are {\em process contexts} $\PCtxtsCH$,
on expressions first {\em monadic contexts} $\MCtxtsCH$ are used to find the next to-be-evaluated monadic action in a sequence of actions.
For the evaluation of (purely functional) expressions, usual (call-by-name) {\em expression evaluation contexts} $\ECtxt \in \ECtxtsCH$ are used, 
and to enforce the evaluation of the (first) argument of the monadic operators $\ttakeMVar$ and $\tputMVar$
the class of {\em forcing contexts} $\FCtxt \in \FCtxtsCH$ is used.
}
\begin{definition}
The standard reduction $\reduce$ is defined by the rules and the closure in \cref{fig:sr-rules}.
It is only permitted
for well-formed processes which are not successful.
\end{definition}
Functional evaluation includes 
$\beta$-reduction (beta), copying shared bindings into needed positions (cpce),  evaluating $\tcase$-
and $\tseq$-expressions (case) and (seq), and moving
$\tletrec$-bindings into the global bindings (mkbinds). 
For monadic computations, rule (lunit) 
implements the monadic evaluation.
Rules (nmvar), (tmvar), and (pmvar) handle the MVar creation and access. 
A \ttakeMVar-operation can only be performed on a filled MVar, and a \tputMVar-operation needs an empty MVar.
Rule (fork) spawns a new  thread.
A concurrent thread of the form 
$\UTHREAD{\treturn~e}$
is terminated (where $e$ is of type {\tt ()}).

\begin{example}
The process
$\MTHREAD{\!\!}\tnewMVar\,\texttt{\rm ()} \tbind (\lambda y. \tforkIO\,(\ttakeMVar\,y)) \tbind\lambda \_.\tputMVar\,y\,\texttt{\rm ()}$
creates a filled MVar, that is emptied by a spawned thread, and then again filled by the main thread.
\end{example}

We say that a $\CH$-process $P$ is {\em successful}  if $P \equiv \NEW x_1 \ldots  x_n. (\UMTHREAD{\treturn~e} \PAR P')$ and if $P$ is well-formed.
This  captures Haskell's behavior that termination of the main-thread  terminates all threads.

\begin{definition}\label{def:CH-convergence}
 Let $P$ be a $\CH$-process. Then $P$ {\em may-converges} (denoted as $P\maycon$), iff 
$P$ is well-formed and $\exists P': P \reduce[,*] P' $ such that $P'$ is  \text{ successful}.
If $P\maycon$ does not hold, then $P$ {\em must-diverges} and we write $P\mustdiv$.
Process $P$ {\em should-converges} (written as $P\mustcon$), iff 
 $P$ is well-formed and $\forall P': P \reduce[,*] P' \implies P'\maycon$.
If $P$ is not should-convergent, then we say $P$ {\em may-diverges} written as $P\maydiv$.
\end{definition}
\reportOnly{Note that a process $P$ is may-divergent iff there is a finite reduction sequence $P \reduce[,*] P'$ such that $P'\mustdiv$.}
\ignore{\Cref{def:CH-convergence} implies that  non-well-formed processes are always must-divergent, since they are irreducible and never successful.
}

\begin{definition}\label{def:simc}
Contextual approximation $\leq_c$ and equivalence $\sim_c$ on $\CH$-processes are defined as
${\leq_c} := {{\leq_{c,\maycon}} \cap {\leq_{c,\mustcon}}}$ and ${\sim_c} := {{\leq_{c}} \cap {\geq_{c}}}$
where 
$P_1 \leq_{c,\maycon} P_2 \text{ iff } \forall \PCtxt\in\PCtxtsCH: \PCtxt[P_1]\maycon \implies \PCtxt[P_2]\maycon$
and 
$P_1 \leq_{c,\mustcon} P_2 \text{ iff } \forall \PCtxt\in\PCtxtsCH: \PCtxt[P_1]\mustcon \implies \PCtxt[P_2]\mustcon$.
For $\CH$-\emph{expressions}, let $e_1 \leq_{c} e_2$ iff for all process-contexts $C$ with a hole at expression position: $C[e_1] \leq_c C[e_2]$
and $e_1 \sim_c e_2$ iff $e_1 \leq_c e_2 \wedge e_2 \leq_c e_1$.

\reportOnly{
A program transformation 
$\eta \subseteq (\ProcCH \times \ProcCH)$
is {\em correct} iff $P_1\ \eta\ P_2 \Longrightarrow P_1 \sim_c P_2$.
}
\end{definition}
As an example, we consider the processes 
$$
\begin{array}{lcl}
P_1 &:=&\nu m.(\MTHREAD{}{\ttakeMVar~m} \PAR 
  \THREAD{}{\ttakeMVar~m} \PAR \MVAR{m}{()})
\\
P_2 &:=& \MTHREAD{}{\treturn~()}
\\
P_3  &:=& \MTHREAD{}{\tletrec~x=x~\tin~x}
\end{array}
$$
Process $P_1$ is may-convergent and may-divergent (and thus not should-convergent), since either the main-thread succeeds in emptying the MVar $m$, or (if the other threads empties the MVar $m$) the main-thread is blocked forever. The process $P_2$ is sucessful. The process $P_3$ is must-divergent. 
  The equivalence $P_1 \sim_{c,\maycon} P_2$ 
holds, but $P_1\not\sim_{c} P_2$, since 
$P_2$ is should-convergent and thus $P_1 \not\sim_{c,\mustcon} P_2$.  As a further example, it is easy to verify that $P_1 \sim_{c,\mustcon} P_3$ holds, since both processes are not should-convergent and a surrounding context cannot change this. However, $P_1 \not\sim_{c,\maycon} P_3$, since
$P_3\mustdiv$.

Contextual approximation  and equivalence are (pre)-congruences.
The following equivalence will help to  prove properties of our translation.  
\begin{lemma}\label{lem:conv-open-equiv-closed}
The relations in \cref{def:simc} are unchanged, if we add closedness: for $\xi \in \{\maycon,\mustcon\}$, let
$P_1 \leq_{c,\xi} P_2$ iff
$\forall \PCtxt\in\PCtxtsCH$ such that
$\PCtxt[P_1],\PCtxt[P_2]$ are closed:
$\PCtxt[P_1]\xi {\implies} \PCtxt[P_2]\xi$.

\end{lemma}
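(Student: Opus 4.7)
The ``only if'' direction is trivial because the condition on the right of the ``iff'' merely restricts the quantification over $\PCtxt\in\PCtxtsCH$ to a subclass. So everything comes down to the converse: assuming the implication $\PCtxt[P_1]\xi\Rightarrow\PCtxt[P_2]\xi$ for every \emph{closing} $\PCtxt$, lift it to arbitrary $\PCtxt\in\PCtxtsCH$.

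Given an arbitrary $\PCtxt$, let $\{x_1,\dots,x_n\}=\FV(\PCtxt[P_1])\cup\FV(\PCtxt[P_2])$ and put $\PCtxt':=\NEW x_1.\cdots\NEW x_n.\PCtxt$, which is again in $\PCtxtsCH$ and for which $\PCtxt'[P_1], \PCtxt'[P_2]$ are closed. The plan is to establish the auxiliary fact
$$R\,\xi \;\iff\; (\NEW x_1.\cdots\NEW x_n.R)\,\xi \qquad \text{for $\xi\in\{\maycon,\mustcon\}$,}$$
for every well-formed process $R$ and variables $x_1,\ldots,x_n$. Once this is in hand, the lemma follows by the standard three-step argument: $\PCtxt[P_i]\xi\Leftrightarrow\PCtxt'[P_i]\xi$, the implication for closing contexts applies to $\PCtxt'$, and the auxiliary fact is then used to strip the $\NEW$-binders again.

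For the auxiliary fact at $\xi=\maycon$, closure of $\reduce$ under process contexts (in particular under $\NEW x_1.\cdots\NEW x_n.[\cdot]$) lifts any successful reduction $R\reduce[,*] R'$ to $\NEW\vec{x}.R\reduce[,*]\NEW\vec{x}.R'$; and if $R'$ is successful, $R'\equiv\NEW\vec{y}.(\MTHREAD{}{e}\PAR Q)$ so $\NEW\vec{x}.R'$ is also successful by commuting $\NEW$-binders. For the backward direction one notes that every reduction out of $\NEW\vec{x}.R$ is of the form $\NEW\vec{x}.R\reduce\NEW\vec{x}.R''$ for some $R\reduce R''$: no rule in \cref{fig:sr-rules} can consume or touch an outermost $\NEW$-binder, because the left-hand sides only match threads, MVars or bindings at the top of a process context, and the outer $\NEW x_i$ are not associated with any MVar $\MVAR{x_i}{e}$, empty MVar $\EMPTYMVAR{x_i}$, or binding $x_i=e$. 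Successfulness also transfers back: $\NEW\vec{x}.R'\equiv\NEW\vec{z}.(\MTHREAD{}{e}\PAR Q)$ forces $R'$ itself to be of the form $\NEW\vec{y}.(\MTHREAD{}{e}\PAR Q')$ up to $\equiv$. Well-formedness is preserved in either direction since $\NEW$-binders do not introduce MVars, bindings, or threads. The case $\xi=\mustcon$ then follows by unfolding its definition and applying the $\maycon$-part to every reachable state, using that every state reachable from $\NEW\vec{x}.R$ is of the form $\NEW\vec{x}.R'$ with $R\reduce[,*]R'$, and vice versa.

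The main obstacle is the backward part of the auxiliary fact, namely verifying that no reduction rule can ever consume the outermost $\NEW x_i$ (equivalently, that these unbound-outside variables are inert). This is a purely syntactic inspection of \cref{fig:sr-rules} together with a careful use of the structural congruence rule $(\NEW x.P_1)\PAR P_2\equiv\NEW x.(P_1\PAR P_2)$ to bring reductions into canonical form, and it closely mirrors the argument that was used in the $\Pi_{\tStop}$-case recorded in the report.
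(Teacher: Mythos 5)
Your proposal is correct and follows essentially the same route as the paper's proof: wrap an arbitrary context in $\NEW$-binders over the free variables to obtain a closing context, transfer $\xi$ across the binders (the paper condenses your auxiliary fact into the one-line observation that reductions are applicable with or without $\nu$-binders on top), apply the hypothesis, and strip the binders again. Your elaboration of why the outermost $\NEW$-binders are inert under reduction is a more detailed justification of that same key step, not a different argument.
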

\reportOnly{
\begin{proof}
 One direction is obvious. 
 For the other direction, let 
 $\xi \in \{\maycon,\mustcon\}$ and assume that for all $\PCtxt$ such that $\PCtxt[P_1],\PCtxt[P_2]$ are closed: $\PCtxt[P_1]\xi \implies \PCtxt[P_2]\xi$.
 Assume that $\PCtxt[P_1]\xi$ and $\FV(\PCtxt[P_1]) \cup \FV(\PCtxt[P_2]) = \{x_1,\ldots,x_n\}$.
 Since reductions are applicable with or without $\nu$-binders on the top, we have $\nu x_1,\ldots,x_n.\PCtxt[P_1]\xi$ and by the precondition
 $\nu x_1,\ldots,x_n.\PCtxt[P_2]\xi$, since 
 $\nu x_1,\ldots,x_n.\PCtxt[\cdot]$ is a $\PCtxts$-context.
From $\nu x_1,\ldots,x_n.\PCtxt[P_2]\xi$ also
$\PCtxt[P_2]\xi$ follows, since reductions are applicable with or without $\nu$-binders on the top.
This shows $P_1 \leq_{c,\xi} P_2$.\qed
\end{proof}
}
\reportOnly{\begin{proposition}\label{prop:equiv-is-sim}
Let $P_1, P_2$ be well-formed and $P_1 \equiv P_2$. Then $P_1 \sim_c P_2$.
\end{proposition}}
\ignore{
The following proposition holds, since correctness of transformations
can be transferred 
from the calculus $\CHF$ \cite{sabel-schmidt-schauss-PPDP:2011}
by an embedding $\iota:\CH \to \CHF$,
 such that $\iota(P) \sim_{c,\CHF} \iota(P')$ implies $P \sim_{c,\CH} P'$. 
 See \cref{app-ch-chf} for the full proof.
%
}
\ignore{
\begin{proposition}\label{prop:det-sr-correct}
The rules \textit{(lunit), (nmvar), (fork)}, \textit{(cpce)}, \mbox{(mkbinds)}, \textit{(beta), (case)},  \textit{(seq)},
\textit{(gc)},
\textit{(dtmvar)}, and \textit{(dpmvar)}  
are correct program transformations.
\end{proposition} 
}
%
%

\ignore{ Wird nicht mehr gebraucht, soweit ich sehe \ldots.
\begin{theorem}\label{thm:standardisierung-CHF-all}
Let $P,Q$ be $\CH$-processes and $P \xrightarrow{*} Q$ be a reduction sequence consisting of $\reduce[]$-steps and correct program transformations.
Then $Q\maycon$ implies $P\maycon$, and $Q\maydiv$ implies $P\maydiv$.
\end{theorem}
\begin{proof}
First we show $Q\maycon$ implies $P\maycon$ by induction on the number of reductions and transformation steps of $P \xrightarrow{*} Q$. If there are no steps,
then the claim holds (since $P = Q$).
If $P \to P' \xrightarrow{*} Q$, then the induction hypothesis shows that $P'\maycon$, \ie{} there exists a successful process $Q'$ such that $P' \xrightarrow{sr,*} Q'$.
We distinguish two cases: (i) If $P \to P'$ by a correct transformation, then $P \sim_c P'$ and thus the definition of $\sim_c$ implies $P\maycon$.
 (ii) If $P \xrightarrow{sr} P'$, then $P \xrightarrow{sr,*} Q'$ and thus $P'\maycon$.
 The proof of the second case is completely analogous.
\end{proof}
}

\section{\texorpdfstring{The Translation $\tau_0$ with Private MVars}{The Translation tau0 with Private MVars}}\label{sec:translation}
\begin{figure*}[t]
$\begin{array}{@{}l@{\,}l@{\,}l@{}}
\tau_0(P)  &= &  \UMTHREAD{\bfdo~\{}
                 \begin{array}[t]{@{}l@{}}
                  \istop \leftarrow \tnewMVar~();  
                  \tforkIO~\tau(P);  
                  \tputMVar~\istop~()\}
                 \end{array}
\\
    \tau(\outputxy{x}{y}.P)  & =& \bfdo~\{
    \begin{array}[t]{@{}l@{}}
        checkx \leftarrow \tnewMVar~();
        \tputMVar~(\mathit{unchan}~x)~ (y,checkx);
        \tputMVar~checkx~();
        \tau(P)\}
    \end{array}
\\
\tau(\inputxy{x}{y}.P) & =& \bfdo~\{
    \begin{array}[t]{@{}l@{}}
        (y,checkx) \leftarrow \ttakeMVar~(\mathit{unchan}~x);
        \ttakeMVar~checkx;
        \tau(P)
         \}
        \}
    \end{array}
\\
\tau(P \PAR Q) &  =& \bfdo~\{\tforkIO~\tau(Q);\tau(P)\}
\\
\tau(\nu x.P) &  =&  \bfdo\,\{
        \begin{array}[t]{@{}l@{}}
        \mathit{chanx} \leftarrow \tnewEmptyMVar;
        \tletr~ x = \tChan\,\mathit{chanx}~\tin~\tau(P)\}\\
     \end{array}
\\
\tau(0) &  = & {\treturn~()} 
\\
      \tau(\tStop) &  = & {\ttakeMVar~\istop}  
\\
         \tau(!P) & = & {\tletr~ f = \bfdo~\{\tforkIO~\tau(P); f\}~\tin~ f}  
\end{array}$
\caption{Translations $\tau_0$ and $\tau$\label{fig:def-tau}}
\end{figure*}

We present a translation $\tau_0$ that encodes $\pistop$-processes as $\CH$-programs. 
It establishes correct synchronous communication by using a  private MVar,
which is created by the sender and its name is 
sent to the receiver. 
The receiver uses it to acknowledge that the message was received. 
Since only the sender and the receiver know this  MVar, no other thread can interfere the communication.The approach has
similarities
with Boudol's translation \cite{boudol:1992} from the $\pi$-calculus into an asynchronous one, where 
a private channel name of the $\pi$-calculus was used to guarantee safe communication between sender and receiver.
 
%
%

For translating $\pi$-calculus channels into $\CH$, we use a recursive data type $\tChannel$ (with constructor $\tChan$), which is defined in 
Haskell-syntax as 
 $$ \texttt{data}~\tChannel  = \tChan~(\tMVar~(\tChannel,(\tMVar~ ())))$$  %
We abbreviate  
$(\tcase_\texttt{Chan}~e~\tof~(\tChan~m~\casepf~m))$ as $(\mathit{unchan}~e)$.

We use $a~\tbindthen~b$ for $a \tbind  (\lambda \_\,.\, b)$ and also use Haskell's do-notation with the following meaning:
$$
\begin{array}{@{}l@{~}l}
\bfdo~\{x \leftarrow e_1; e_2\} &=  
        e_1~\tbind~\lambda x. (\bfdo~\{e_2\})
\\
\bfdo~\{(x,y) \leftarrow e_1; e_2\} &=        e_1 \tbind~\lambda z. \tcase_{\tt Pair}~z~\tof~(x,y)\casepf (\bfdo~\{e_2\})
\end{array}\qquad
\begin{array}{l@{~}l}
\bfdo~\{e_1; e_2\}& = e_1 \tbindthen~ (\bfdo~\{e_2\}) 
\\
\bfdo~\{e\}  &=  e 
\end{array}$$
%
As a further abbreviation, we write 
$y\leftarrow \tnewEmptyMVar$ inside a \bfdo-block to abbreviate the sequence 
$y \leftarrow \tnewMVar~\bot; \ttakeMVar~y$,
where $\bot$ is a must-divergent expression.
Our translation uses one 
MVar per channel 
which contains a pair consisting of the (translated) name of the channel and a further
MVar used for the synchronization, which is private, i.e.~only the sender and the receiver know it. Privacy is established by the sender: it creates a new MVar for every send operation.
Message $y$ is sent over channel $x$ by sending a pair $(y,check)$ where $check$ is an MVar containing ().
The receiver waits  
 for a message $(y,check)$ by the sender.
After sending the message, the sender waits until \textit{check} is emptied, and the receiver acknowledges 
by emptying the MVar  \textit{check}\footnote{A variant of the translation would be to change the roles for the acknowledgement such that 
an empty MVar is created, which is filled by the receiver and emptied by the sender. 
The reasoning on the correctness of the translation is very similar to the one presented here.}

\begin{definition}\label{definition:tau}  
We define the  translation $\tau_0$ and its inner translation $\tau$ 
from the $\pistop$-calculus into the $\CH$-calculus  
    in \cref{fig:def-tau}.
For contexts, the translations are the same where the context hole is treated like a constant and translated as $\tau([\cdot])=[\cdot]$.
 \end{definition}
The translation $\tau_0$ generates a main-thread and an MVar $\istop$.  
The main thread is then waiting for the MVar $\istop$ to be emptied. 
The inner translation $\tau$ translates the constructs and constants
of the $\Pi_{\tStop}$-calculus into $\CH$-expressions.    
Except for the main-thread (and using keyword \texttt{let} instead of $\tletrec$), the translation $\tau_0$ 
generates a valid Concurrent Haskell-program, \ie~if we write $\tau_0(P) = \UMTHREAD{e}$ 
as $\texttt{main} = e$ , we can execute the translation in the Haskell-interpreter. 
%
\ignore{The translation $\sigma_0$, which will be defined below, does not generate a Haskell program, since it maps to process components.}

\begin{example}
We consider the $\pistop$-process 
$P:=\nu x,y_1,y_2,z.(\inputxy{x}{y_1}.0\PAR\inputxy{x}{y_2}.\tStop\PAR\outputxy{x}{z}.0)$
which is may-convergent and may-divergent: depending on which receiver communicates with the sender, the result is 
the successful process $\nu x,y_1.(\inputxy{x}{y_1}.0\PAR\tStop)$
or the must-divergent process  $\nu x,y_2.(\inputxy{x}{y_2}.\tStop)$.
The $\CH$-process $\tau_0(P)$ reduces after several steps to the process
{\normalfont
$$\begin{array}{l}
\multicolumn{1}{@{}l@{}}{\nu stop,chanx,chany_1,chany_2,chanz,checkx,x,y_1,y_2,z.(}\\
\MVAR{chanx}{(z,checkx)} \PAR 
\EMPTYMVAR{chany_1} \PAR 
\EMPTYMVAR{chany_2} \PAR 
\EMPTYMVAR{chanz}  
\PAR
\MVAR{checkx}{\texttt{\rm ()}}
\PAR 
\MVAR{\istop}{\texttt{\rm ()}}
\\\PAR
x{=}\tChan\,{chanx} \PAR 
z{=}\tChan\,{chanz} 
\PAR 
y_1{=}\tChan\,{chany_1} \PAR 
y_2{=}\tChan\,{chany_2} 
\PAR 
\MTHREAD{}
                  \tputMVar~\istop~()
\\
\PAR 
\THREAD{}
 \bfdo~\{\begin{array}[t]{@{}l@{}}
         \tputMVar~checkx~();\treturn~\texttt{\rm ()}\}
 \end{array}\\
\PAR 
 \THREAD{}
 \bfdo~\{
 \begin{array}[t]{@{}l@{}}
    (y_1,checkx) \leftarrow \ttakeMVar~chanx; \ttakeMVar~checkx; \treturn~\texttt{\rm ()}\}
 \end{array}
\\
\PAR 
 \THREAD{}
 \bfdo~\{
 \begin{array}[t]{@{}l@{}}
 (y_2,checkx) \leftarrow \ttakeMVar~chanx;
 \ttakeMVar~checkx;
 \ttakeMVar~\istop\})
 \end{array}
\end{array}
$$}
Now the first thread (which is the translation of sender $\outputxy{x}{z}.0$) is blocked,
since it tries to fill the full MVar $checkx$. The second thread (the encoding of $\inputxy{x}{y_1}.0$) and the third thread (the encoding of $\inputxy{x}{y_2}.\tStop$) race for emptying the MVar $chanx$.
If the second thread wins, then it will fill the MVar $checkx$ which is then emptied by the first thread, and all other threads are blocked forever.
If the third thread wins, then it will fill the MVar $checkx$ which is then emptied by
the first thread, and then the second thread will empty the MVar $\istop$. This allows the main-thread to fill it, resulting in a 
successful process.
\end{example}

For the following definition of $\tau$ being compositional, adequate, or fully abstract,
we  
adopt the view that  $\tau$ is 
a translation from $\pistop$ into the $\CH$-language with a special initial evaluation context  $\Couttau$. 

\begin{definition}
 Let 
$
\begin{array}{@{}l@{}l@{}}
\Couttau  := \nu \istop. \UMTHREAD{\bfdo}~\{&\istop \leftarrow \tnewMVar~();
\tforkIO~[\cdot];
\tputMVar~\istop~()\}.
\end{array}
$ 
Variants $\maycontau,\mustcontau$ of 
 may- and should-convergence
 of expressions $e$
within the context $\Couttau$
in $\CH$ are defined as
%
      $e \maycontau~ \text{iff}~  \Couttau[e]  \maycon$~and~$e \mustcontau  ~\text{iff}~  \Couttau[e]  \mustcon$. 
%
The relation $\sim_{c,\tau_0}$ is defined by 
$\sim_{c,\tau_0} := \leq_{c,\tau_0} \cap \geq_{c,\tau_0}$, where
\mbox{$e_1 \le_{c,\tau_0} e_2$}  iff
   $\forall C: \text{if } \FV(C[e_1], C[e_2]) \subseteq \{\istop\},  \text {then}$
                   $C[e_1]\maycontau \implies C[e_2]\maycontau
                       \mbox{ and }  C[e_1]\mustcontau \implies C[e_2] \mustcontau$.
\end{definition}
 
  Since $\le_{c,\CH}$ is a subset of $\le_{c,\tau_0}$,
we often can use the more general relations for reasoning.

%


\begin{definition}\label{def:adequate-and-fullyabstract}
Let $\pistopC$ be the  contexts of $\pistop$.  
We define the following properties for $\tau_0$ and $\tau$ 
\reportOnly{(see \cite{schmidt-schauss-niehren-schwinghammer-sabel-ifip-tcs:08,schmidtschauss-sabel-niehren-schwing-tcs:15})}\paperOnly{(see \cite{schmidtschauss-sabel-niehren-schwing-tcs:15} for a general framework of properties of translations under observational semantics)}. 
%
For open processes $P,P'$, we say that translation $\tau$ is 
\begin{description}
\item[{convergence-equivalent}] iff for  all   
    $P \in \pistop$:
    $P\maycon \iff \tau(P)\maycontau$ and 
    $P\mustcon \iff \tau(P)\mustcontau$,
 \item[{compositional upto $\{\maycontau,\mustcontau\}$}] iff for all
    $P \in \pistop$, all  $C\in\pistopC$, and all $\xi \in 
    \{\maycontau,\mustcontau\}:$\\
    $\text{if } \FV(C[P]) \subseteq \{stop\}, \text{ then } \tau(C[P]) \xi \iff \tau(C)[\tau(P)] \xi$,
  \item[{adequate}] iff for all processes
  $P,P' \in \pistop$: 
  $\tau(P) \le_{c,\tau_0} \tau(P') \implies  P \le_c P'$, and
  \item[{fully abstract}] iff for all processes
  $P,P' \in \pistop$: 
  $P \le_c P' \iff  \tau(P) \le_{c,\tau_0} \tau(P')$.
\end{description}
 \end{definition}
 
 Convergence-equivalence of  translation $\tau_0$ for may- and should-convergence 
 holds. For readability the proof is 
 omitted, but given in the 
 technical report \cite{schmidt-schauss-sabel:frank-60:19}, where we show:
 
\begin{theorem}\label{thm:may-must-equivalence}
  Let  $P \in \Pi_{\tStop}$ be closed.  Then $\tau_0$ is convergence-equivalent for $\maycon$ and $\mustcon$, \ie~$P \maycon$ is equivalent to $\tau_0(P)\maycon$.
  and $P \mustcon$ is equivalent to $\tau_0(P)\mustcon$.
  This also shows convergence-equivalence of $\tau$ \wrt~$\maycontau,\mustcontau$, 
  \ie~$P\maycon \iff \tau(P)\maycontau$ and $P\mustcon \iff \tau(P)\mustcontau$.
\end{theorem}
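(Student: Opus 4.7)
The plan is to establish both equivalences via a two-way simulation between $\pistop$-reductions and $\CH$-reductions on images of $\tau_0$, and to derive should-convergence from the may-convergence result using the characterization that $P\mustcon$ iff every $P'$ with $P \xrightarrow{sr,*} P'$ satisfies $P' \maycon$. The corollary for $\tau$ with respect to $\maycontau$ and $\mustcontau$ is then immediate, since $\Couttau[\tau(P)]$ is $\alpha$-equivalent to $\tau_0(P)$.

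For the forward direction $P \maycon \implies \tau_0(P)\maycon$, I would induct on the length of a reduction sequence $P \xrightarrow{sr,*} P_n$ with $P_n \equiv \PCtxt[\tStop]$, relying on the key lemma that each $\pistop$-interaction $P \sr P'$ is simulated by a bounded sequence of $\CH$-reductions from $\tau_0(P)$ to a process structurally equivalent to $\tau_0(P')$ modulo finished monadic housekeeping. The interaction $\inputxy{x}{y}.Q_1 \PAR \outputxy{x}{z}.Q_2 \ia Q_1[z/y]\PAR Q_2$ is simulated by the protocol sequence: create $checkx$ filled with $()$; the sender puts $(z,checkx)$ on the channel MVar for $x$; the receiver takes $(z,checkx)$ off; the receiver empties $checkx$; the sender, previously blocked at $\tputMVar~checkx~()$, now fills $checkx$; and both continuations proceed. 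For the base case, the translated $\tStop$-thread executes $\ttakeMVar~\istop$, which unblocks the main thread's $\tputMVar~\istop~()$; the main thread reduces to $\treturn~()$, so the resulting $\CH$-process is successful.

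For the converse $\tau_0(P)\maycon \implies P \maycon$, I would classify the $\CH$-states reachable from $\tau_0(P)$ by intermediate configurations that record, for each in-flight translated communication, which protocol steps have been executed, and for each $!$, how many unfoldings have been spawned. The crucial structural observation is that a $checkx$-MVar is known only to the unique sender thread that created it and, after the message has been taken from the channel MVar, additionally to the unique receiver thread that read it; no third thread can access it, by the private-name idea of the translation. Hence any in-flight protocol is either completable to the unique outcome matching that sender-receiver pair, or permanently stuck for a reason already present in the source (an unmatched $\outputxy{x}{z}$ or $\inputxy{x}{y}$). Given any successful reduction $\tau_0(P) \reduce[,*] Q$, I would complete all in-flight protocols in $Q$ to obtain $Q^*$ with $Q \reduce[,*] Q^*$ of the form $\tau_0(P^*)$ for some $P^*$ with $P \xrightarrow{sr,*} P^*$; since success in $\tau_0$ can only arise from some $\tau(\tStop)$ firing $\ttakeMVar~\istop$, the process $P^*$ must contain $\tStop$, yielding $P \maycon$.

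For should-convergence I combine the two simulations: if $P \mustcon$, then every $\tau_0(P) \reduce[,*] Q$ completes to $\tau_0(P^*)$ with $P \xrightarrow{sr,*} P^*$; by $P \mustcon$ we have $P^* \maycon$, and the forward simulation gives $Q \maycon$, so $\tau_0(P)\mustcon$; the converse is symmetric. The main obstacle is making the intermediate-configuration analysis precise in the presence of the replication translation $\tau(!P) = \tletr~f = \bfdo~\{\tforkIO~\tau(P); f\}~\tin~f$, which is continuously reducible and thus on its own generates infinitely many $\CH$-configurations reachable from $\tau_0(P)$. The invariant that must be proved is that interleaving these spawn-steps with protocol-steps and other functional reductions neither creates nor destroys a path to success or to deadlock that is not already witnessed by some finite replication-unfolding of the source, so that the correspondence holds up to pending-but-harmless spawn iterations.
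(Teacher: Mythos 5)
Your plan matches the strategy of the paper's proof (which is only sketched in the paper and carried out in the technical report): a two-way simulation in which each $\pi$-interaction corresponds to a complete MVar-protocol block, reflection of may-convergence by normalizing reachable $\CH$-states back to images of source processes, and should-convergence derived from may-convergence of all reachable states on both sides. The only organizational difference is that the paper routes the argument through an auxiliary translation $\sigma_0$ that maps process components directly to $\CH$-process components (proved equivalent to $\tau$) and phrases the should-convergence halves via preservation and reflection of may-divergence, whereas you classify reachable configurations of $\tau_0(P)$ directly; the crux you single out --- completing or discharging in-flight protocols using the privacy of the $\mathit{checkx}$-MVars, and taming the perpetually reducible replication thread --- is exactly the content of the lemmas the paper defers to the report.
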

%
%

\ignore{
We informally describe the main steps of the proof.
Both propositions have an ``if'' and an ``only-if'' part, i.e.~to prove them, one has to show four claims:
\begin{enumerate}
 \item[1.] $\tau_0$ preserves may-convergence, i.e. for closed $P\in\Pi_{\tStop}$: $P\maycon \implies \tau_0(P)\maycon$.
 \item[2.] $\tau_0$ reflects may-convergence, i.e. for closed $P\in\Pi_{\tStop}$: $\tau_0(P)\maycon\implies P\maycon$.
 \item[3.] $\tau_0$ preserves must-convergence, i.e. for closed $P\in\Pi_{\tStop}$: $P\mustcon \implies \tau_0(P)\mustcon$.
 \item[4.] $\tau_0$ reflects must-convergence, i.e. for closed $P\in\Pi_{\tStop}$: $\tau_0(P)\mustcon\implies P\mustcon$.
\end{enumerate}

Actually, for the latter two parts, we show reflection and preservation of may-divergence:
\begin{enumerate}
 \item[3'.]  $\tau_0$ reflects may-divergence, i.e. for closed $P\in\Pi_{\tStop}$: $\tau_0(P)\maydiv\implies P\maydiv$.
 \item[4'.] $\tau_0$ preserves may-divergence, i.e. for closed $P\in\Pi_{\tStop}$: $P\maydiv \implies \tau_0(P)\maydiv$.
\end{enumerate}
Note that claim 3 is equivalent to 3' and claim 4 is equivalent to claim 4'.

 The proof technique for showing parts 1,2,3',4'
 is to investigate properties of reduction sequences (those that end in a successful process and those that end in a must-divergent process)  
 and to (inductively) apply appropriate (and permitted) rearrangements of reduction sequences and correct program transformations.
 
 For the given reduction sequences in the $\pistop$-calculus (giving evidence that $P\maycon$, or $P\maydiv$ resp. holds) for proving part 1 and 4', we assume that they are given as steps of $dsr$-reductions (\cref{def:dsr}). This simplifies the proof and is correct due to \cref{theo:pi-dsr}.
  
 A further technical detail is that we use another translation $\sigma$ (and $\sigma_0$) which translates several $\pi$-process components directly into $\CH$-process components (for instance, parallel composition 
 is directly translated into parallel composition), instead
 of translating them into code which generates the components (as $\tau$ does). An intermediate step in the proof is to show equivalence of $\sigma$ and $\tau$  w.r.t. contextual equivalence. 
 
For proving the reflection parts, i.e. part 2 and part 3, the following diagrams sketch the overall idea of the induction proof,
which mainly is on the number $n$ of given reductions.

For part 2 the diagram is

$$ \xymatrix@C=20mm@R=5mm{
P \ar@{-->}[d]_{sr,*}   \ar@{->}[rr]^{\sigma_0} && \sigma_0(P)\ar@{-->}[dl]^{(sr \cup \sim_c),*}\ar@{->}[d]^{sr,n}\\
P_1 \ar@{-->}[r]^{\sigma_0}&  \sigma_0(P_1) (succ.)  & Q_1 (succ.)
}
$$

For part 3, the diagram is

$$\xymatrix@C=25mm@R=5mm{
P \ar@{-->}[d]_{sr,*}   \ar@{->}[rr]^{\sigma_0} && \sigma_0(P)\ar@{-->}[dl]^{(sr \cup \sim_c),*}\ar@{->}[d]^{sr,n}\\
P_1\mustdiv \ar@{-->}[r]^{\sigma_0}&  \sigma_0(P_1) \mustdiv  & Q_1 \mustdiv
}
$$
 
The difference of both parts is the induction base: For part 2, the processes are successful, while for part 3, the processes have to be must-divergent. Both diagrams also 
illustrate the use of correct program transformations to obtain a ``reordered'' sequence of reductions and transformation for $\sigma_0(P)$ such that it can be back-translated 
into a sequence of $sr$-reductions in the $\pistop$-calculus.


 }  


We show that the translation is adequate (see \cref{thm:adequate} below).
 The interpretation of this result is that the $\pi$-calculus with the concurrent semantics is semantically represented within $\CH$.
 %
This result is on a more abstract level, since it is 
based on the property whether the programs (in all contexts) produce values or may run into failure, or get stuck; or not. Since the $\pi$-calculus does not have a notion of values, 
also the translated processes cannot be compared w.r.t. values other than a single form of value.

 The translation $\tau_0$ is not fully abstract (see \cref{thm:non-fully-abstract} below), which is rather natural, since it only means that it is mapped 
 into a subset of the $\CH$-expressions
 and that this is a proper subset w.r.t. the semantics.
%
%
%
%
For proving both theorems, we first use  a simple form of a context lemma: 
 \begin{lemma}\label{lemma:D-context-lemma-closed}
  Let $e,e'$ be $\CH$-expressions,    
  where the only free variable in $e,e'$ is $\istop$. 
  
  Then $\Couttau[e] \leq_c \Couttau[e']$ holds, if and only
  if  $\Couttau[e]\maycon \implies  \Couttau[e']\maycon$  and  $\Couttau[e]\mustcon \implies  \Couttau[e']\mustcon$.
 \end{lemma}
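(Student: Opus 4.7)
The ``only if'' direction is immediate: instantiating the definition of $\leq_c$ with the empty context $\PCtxt = [\cdot]$ recovers exactly the two claimed implications.

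For the ``if'' direction, I plan to show that for every $\PCtxt \in \PCtxtsCH$ the two convergence implications are transported, i.e., $\PCtxt[\Couttau[e]]\maycon \Rightarrow \PCtxt[\Couttau[e']]\maycon$ and likewise for $\mustcon$. By \cref{lem:conv-open-equiv-closed} it suffices to treat $\PCtxt$ such that $\PCtxt[\Couttau[e]]$ and $\PCtxt[\Couttau[e']]$ are closed and well-formed. The key observation is that $\Couttau[e]$ is already closed (since the only possible free variable $\istop$ is captured by the outermost $\NEW$ of $\Couttau$) and already contains the unique main thread. Using structural congruence to pull $\NEW$-binders to the outside and $\alpha$-renaming to keep the internal names of $\Couttau[e]$ fresh, any such $\PCtxt[\Couttau[e]]$ is congruent to $\NEW \vec{y}.(Q \PAR \Couttau[e])$, where $Q$ is a parallel composition of threads, MVars, and bindings coming from $\PCtxt$, $Q$ contains no main thread (else well-formedness fails), and $Q$ shares no names with the internals of $\Couttau[e]$.

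I then plan to establish that reductions of $\NEW \vec{y}.(Q \PAR \Couttau[e])$ decompose into \emph{independent} reductions of $Q$ and of $\Couttau[e]$: no rule in \cref{fig:sr-rules} can fire across the two halves, because all MVars, binding variables, and the $\istop$-MVar of $\Couttau[e]$ are $\NEW$-scoped away from $Q$, and no rule creates or removes a main thread. From this independence I will deduce $\PCtxt[\Couttau[e]]\maycon \iff \Couttau[e]\maycon$ (success requires the unique main thread, which lies in $\Couttau[e]$, to return; $Q$ may remain idle) and $\PCtxt[\Couttau[e]]\mustcon \iff \Couttau[e]\mustcon$ (any reachable state of the combined process still contains a copy of the reduced $\Couttau[e]$, from which success is reachable iff the corresponding $\Couttau[e]$-only state is may-convergent). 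Applying the same two equivalences to $e'$ and combining with the hypothesis yields the desired preorder.

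The main obstacle will be making the independence of reductions fully precise. This involves: (i)~inspecting the rule set in \cref{fig:sr-rules} to confirm that no rule creates a main thread or accesses a $\NEW$-bound name from outside its scope, (ii)~exploiting the distinct variable convention together with $\alpha$-renaming so that $Q$'s and $\Couttau[e]$'s introduced names remain disjoint throughout reduction, and (iii)~for should-convergence, handling the subtlety that $Q$ may itself be may-divergent in isolation; this is harmless because independence allows us to reduce $\Couttau[e]$ to success from any reachable state regardless of $Q$'s behaviour, so any may-divergence of the combined process must already stem from a may-divergence of $\Couttau[e]$.
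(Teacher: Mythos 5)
Your proof is correct and follows the intended route: since $\Couttau[e]$ is closed and carries the unique main thread, any $\PCtxt\in\PCtxtsCH$ decomposes as $\NEW\vec y.(Q\PAR[\cdot])$ whose $Q$-part can neither interact with the plugged-in process (no shared names, so no rule of the reduction semantics fires across the two halves) nor influence success, giving $\PCtxt[\Couttau[e]]\,\xi \iff \Couttau[e]\,\xi$ for $\xi\in\{\maycon,\mustcon\}$, from which the lemma follows; this is the argument the paper defers to the technical report. One small slip: the occurrence of $\istop$ in $e$ is captured by the $\lambda\istop$ arising from desugaring $\istop\leftarrow\tnewMVar~()$ rather than by the outer $\nu\istop$ of $\Couttau$, but this does not affect your argument since $\Couttau[e]$ is closed either way.
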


 \begin{proposition}\label{prop:tau-is-compositional}
 The  translation $\tau$  is 
compositional upto $\{\maycontau,\mustcontau\}$.
 \end{proposition}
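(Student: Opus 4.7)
The plan is to establish the stronger syntactic identity
\[
\tau(C[P]) =_\alpha \tau(C)[\tau(P)] \qquad \text{for all } C\in\pistopC,~P\in\pistop,
\]
from which compositionality follows immediately: for $\xi\in\{\maycontau,\mustcontau\}$, the two expressions become $\alpha$-equivalent after being plugged into $\Couttau$, and $\alpha$-equivalent $\CH$-processes are structurally congruent (by the $\equiv$-law $P_1\equiv P_2$ if $P_1=_\alpha P_2$) and hence agree on $\maycon$ and $\mustcon$. So the compositionality biconditional reduces to the syntactic identity above.

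The proof of the identity is a straightforward structural induction on $C$. The base case $C=[\cdot]$ is immediate by the clause $\tau([\cdot])=[\cdot]$: both sides equal $\tau(P)$. Each inductive clause of the definition of $\tau$ on contexts literally mirrors the corresponding clause on processes, so after unfolding the outer translation the two sides differ only in whether the recursive occurrence is $\tau(C'[P])$ or $\tau(C')[\tau(P)]$, and the induction hypothesis closes the gap. For instance, in the case $C=\outputxy{x}{y}.C'$, both sides unfold to the same $\bfdo$-block that creates a fresh $\mathit{checkx}$, pushes the pair $(y,\mathit{checkx})$ onto $\mathit{unchan}~x$, fills $\mathit{checkx}$, and then continues with the recursive occurrence. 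The remaining cases $\inputxy{x}{y}.C'$, $C'\PAR Q$, $Q\PAR C'$, $!C'$, and $\nu x.C'$ are handled identically: the outer clause of $\tau$ is syntactically the same whether the hole sits inside a process or a context, so pulling $\tau(P)$ through commutes with the recursion.

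The one step demanding care is name hygiene. The translation introduces fresh auxiliary names, namely $\mathit{checkx}$, $\mathit{chanx}$, the bound $y$ in the pair pattern of the input clause, and the recursively bound $f$ in the replication clause, and these must not accidentally capture free variables of $\tau(P)$ or of $\tau(C')$. However, each of them is introduced inside a $\bfdo$-scope, either via $\tnewMVar$, $\tnewEmptyMVar$, a $\lambda$-binder arising from the desugaring of the do-notation, or a local $\tletrec$. Under the distinct variable convention (and the assumption that the translation chooses its fresh names freshly at each structural step) we may $\alpha$-rename all of them to avoid clashes, which is exactly why the identity is stated up to $\alpha$-equivalence.

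Thus the main ``obstacle'' is purely bookkeeping of freshly introduced names rather than any semantic reasoning about $\CH$; no appeal is needed beyond the invariance of $\maycon$ and $\mustcon$ under $\alpha$-renaming (equivalently, under the $\equiv$-law for $\CH$), which is built into the definition of $\maycontau$ and $\mustcontau$. Compositionality of $\tau$ upto $\{\maycontau,\mustcontau\}$ then follows directly from the syntactic identity.
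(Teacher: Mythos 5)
Your proposal is correct and matches the paper's (one-line) argument: the translation is defined homomorphically, so $\tau(C[P])$ and $\tau(C)[\tau(P)]$ coincide up to the bookkeeping of names, and $\alpha$-equivalence (being part of $\equiv$ in $\CH$) preserves $\maycontau$ and $\mustcontau$. One caution on your name-hygiene paragraph: the pattern variable $y$ in the clause for $\inputxy{x}{y}.C'$ (and likewise the $\tletrec$-bound $x$ in the clause for $\nu x.C'$) is \emph{meant} to capture free occurrences in $\tau(P)$, mirroring the capturing behaviour of the $\pi$-context $\inputxy{x}{y}.[\cdot]$, so it must not be $\alpha$-renamed before plugging in; only the genuinely auxiliary names ($\mathit{checkx}$, $\mathit{chanx}$, $f$) need to be chosen fresh to avoid accidental capture.
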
 
\reportOnly{ \begin{proof} 
This follows by checking whether the single cases of the translation $\tau$ are independent of the surrounding context, 
and translate every level independently.
 \end{proof}
 }

%
%

We show that the translation $\tau$ 
transports $\pistop$-processes into $\CH$,   
   such that adequacy holds. Thus 
   the translated processes also correctly mimic the behavior of the original $\pistop$-processes when plugged into contexts.
   If the translated open processes cannot be distinguished by $\le_{c,\tau_0}$, \ie~there is no test that detects a difference \wrt may- and should-convergence, then the original processes are equivalent in the $\pi$-calculus.
%
However, this open translation is not fully abstract, which means that there are    
  $\CH$-contexts (not in the image of the translation) 
  that can see and exploit too much of the 
details of the translation.


\begin{theorem}\label{thm:adequate}
The translation $\tau$ is adequate. 
\end{theorem}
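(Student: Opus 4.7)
The plan is to unfold adequacy into contextual tests in the source calculus and transport each test to the target via the two available bridges: convergence-equivalence (\cref{thm:may-must-equivalence}) and compositionality (\cref{prop:tau-is-compositional}).

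Assume $\tau(P) \le_{c,\tau_0} \tau(P')$ and fix an observation $\xi \in \{\maycon, \mustcon\}$, with corresponding target observation $\xi' \in \{\maycontau, \mustcontau\}$. I want to establish $P \le_{c,\xi} P'$. By the lemma reducing contextual preorder to closing contexts, it suffices to show $C[P]\xi \implies C[P']\xi$ for every $C \in \pistopC$ with $C[P]$ and $C[P']$ closed. Suppose $C[P]\xi$. Convergence-equivalence of $\tau$, applicable because $C[P]$ is closed, yields $\tau(C[P])\,\xi'$. Compositionality, again using closedness of $C[P]$, rewrites this as $\tau(C)[\tau(P)]\,\xi'$.

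Now $\tau(C)$ is obtained by inductively applying the clauses of \cref{definition:tau} to $C$, treating the $\pistop$-hole as an expression-level constant; one checks by induction on $C$ that $\tau(C)$ is a $\CH$-context with a single expression-level hole of the right monadic type, sitting where $\tau(P)$ will be plugged. Because $C$ closes both $P$ and $P'$, every free $\pi$-name of $P$ is bound in $\tau(C)$ by the $\tletr$-binder $x = \tChan~chanx$ arising from the corresponding $\nu x$-clause, and every fresh MVar name introduced by $\tau$ is bound by its accompanying $\tnewEmptyMVar$; hence $\FV(\tau(C)[\tau(P)]) \cup \FV(\tau(C)[\tau(P')]) \subseteq \{\istop\}$. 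The hypothesis $\tau(P) \le_{c,\tau_0} \tau(P')$ therefore applies and delivers $\tau(C)[\tau(P')]\,\xi'$. Reversing compositionality gives $\tau(C[P'])\,\xi'$, and a second application of convergence-equivalence yields $C[P']\,\xi$, closing the chain.

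The only delicate point is verifying that $\tau(C)$ is indeed admissible as a context for $\le_{c,\tau_0}$, which reduces to the two syntactic facts above: a single expression-level hole of the right type, and free variables confined to $\{\istop\}$ after filling. Both are routine structural inductions on $\pistopC$, relying on the uniform handling of the hole by $\tau$ and the matched introduction of a $\tletr$- and $\tnewEmptyMVar$-binder for each $\nu$-clause. Once these are in place, the whole argument is a diagram chase with convergence-equivalence doing the crossings between source and target, and compositionality doing the commutations between $\tau(C[\cdot])$ and $\tau(C)[\tau(\cdot)]$.
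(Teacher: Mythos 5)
Your proof is correct and follows essentially the same route as the paper's: restrict to closing contexts, cross into $\CH$ via closed convergence-equivalence, commute $\tau(C[\cdot])$ with $\tau(C)[\tau(\cdot)]$ via compositionality, apply the hypothesis $\tau(P)\le_{c,\tau_0}\tau(P')$, and cross back. The extra care you take in checking that $\tau(C)$ is an admissible context for $\le_{c,\tau_0}$ (single hole, free variables confined to $\{\istop\}$) is a detail the paper leaves implicit, but it does not change the argument.
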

\begin{proof}
We prove the adequacy for the preorder $\le_{c,\tau_0}$, 
for  
$ \sim_{c,\tau_0}$ and $\sim_c$ the claim follows by symmetry.
Let $P,P'$ be $\pistop$-processes, such that $\tau(P) \le_{c,\tau_0} \tau(P')$.  
We show that  $P \le_c P'$.
We use \cref{lem:conv-open-equiv-closed} to restrict considerations to closed $C[P], C[P']$ below.
Let $C$ be a context in $\pistop$, 
such that $C[P], C[P']$  are closed and  $C[P]\maycon$. 
Then $\tau_0(C[P]) = \Couttau[\tau(C[P])]$.   
Closed convergence-equivalence implies $\Couttau[\tau(C[P])]\maycon$.
By \cref{prop:tau-is-compositional}.
we have $\Couttau[\tau(C)[\tau(P)]]\maycon$.
Now $\tau(P) \le_{c,\tau_0} \tau(P')$ implies
   $\Couttau[\tau(C)[\tau(P')]]\maycon$, which is the same as  $\Couttau[\tau(C[P'])]\maycon$ using  \cref{prop:tau-is-compositional}.
   Closed convergence-equivalence implies $C[P']\maycon$. 
   The same arguments hold for $\mustcon$ instead of $\maycon$. 
%
   In summary, we obtain  $P \le_c P'$.
\end{proof}

\begin{theorem}\label{thm:non-fully-abstract}
The translation $\tau$ is not fully abstract,
but it is fully abstract on closed processes, i.e.~for closed processes $P_1, P_2 \in \Pi_{\tStop}$, 
   we have 
   $P_1 \leq_c P_2  \iff  \tau(P_1) \leq_{c,\tau_0} \tau(P_2)$.
\end{theorem}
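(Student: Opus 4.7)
The strategy splits by the two assertions. For the $(\Leftarrow)$ direction of closed full abstraction one invokes Theorem~\ref{thm:adequate} (adequacy) directly, since its statement does not require $P,P'$ to be closed and the image of $\tau$ on closed inputs is a special case. For the $(\Rightarrow)$ direction, take closed $P_1,P_2$ with $P_1\le_c P_2$. First, specialise the hypothesis to the $\pi$-context $C=[\cdot]$ and apply convergence-equivalence (Theorem~\ref{thm:may-must-equivalence}) to get $\tau_0(P_1)\maycon\Rightarrow\tau_0(P_2)\maycon$ and likewise for $\mustcon$. Since $P_i$ is closed, $\FV(\tau(P_i))\subseteq\{\istop\}$, so Lemma~\ref{lemma:D-context-lemma-closed} applies and yields $\Couttau[\tau(P_1)]\le_c\Couttau[\tau(P_2)]$ at the process level. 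To lift this to the expression-level relation $\le_{c,\tau_0}$, take an arbitrary CH expression context $C''$ with $\FV(C''[\tau(P_i)])\subseteq\{\istop\}$ and apply Lemma~\ref{lemma:D-context-lemma-closed} a second time to $C''[\tau(P_i)]$; this reduces the required implication to convergence preservation at $\Couttau[C''[\tau(P_i)]]$, which is bridged by compositionality (Proposition~\ref{prop:tau-is-compositional}) together with the observation that, for closed $P_i$, the MVars created by $\tau(P_i)$ are bound strictly inside its own scope and $C''$ can interact with $\tau(P_i)$ only through the shared $\istop$ and through the monadic return.

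For the non-full-abstraction claim, I plan to exhibit an open pair $P_1\sim_c P_2$ in $\pistop$ whose translations are distinguished by a CH context using features absent from $\pistopC$. The crucial observation is that $\tau(\outputxy{x}{y}.P)$ commits to a two-phase handshake: the sender first writes $(y,\mathit{checkx})$ into the channel MVar and then blocks at $\tputMVar~\mathit{checkx}~()$ until a matching receiver empties $\mathit{checkx}$. Consequently an orphan output whose $\pi$-counterpart is merely idle becomes a genuinely deadlocked thread after translation. A CH context of the form $C''[\cdot] = \bfdo\{z \leftarrow \tnewEmptyMVar;\ \tletrec~x = \tChan~z~\tin~\bfdo\{[\cdot];\ \ttakeMVar~\istop\}\}$ binds the free name $x$ to a fresh empty channel and sequences a subsequent $\ttakeMVar~\istop$ that only fires if the hole action completes; hence it can observe the CH deadlock indirectly via whether the main thread of $\Couttau$ is ever able to release $\istop$. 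The $\pi$-context grammar $\pistopC$ permits only parallel composition, prefix guards and $\nu$-binders, so no $\pi$-context can mimic such ``run-then-observe'' sequencing after an arbitrary stuck subprocess.

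The hard part is the analytical bridge in the $(\Rightarrow)$ direction of closed full abstraction: concretely, arguing that the convergence of $\Couttau[C''[\tau(P_i)]]$ is controlled by $P_i$'s $\pi$-observables composed with a $\pi$-analogue of $C''$. The closedness of $P_i$ is essential, as it guarantees that every channel MVar introduced by $\tau(P_i)$ is opaque to $C''$ and that the only externally visible effect of $\tau(P_i)$ is its interaction with $\istop$ and its monadic return; consequently the CH-specific observational tricks available to $C''$ collapse, after applying convergence-equivalence and the two invocations of Lemma~\ref{lemma:D-context-lemma-closed}, to observations already extractable at the $\pi$-level via $C=[\cdot]$.
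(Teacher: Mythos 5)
Your treatment of the first claim (non-full-abstraction) has a genuine gap: you never exhibit the pair $P_1\sim_c P_2$, and the distinguishing mechanism you sketch points in the wrong direction. The paper's counterexample is $P=\outputxy{x}{y}.\tStop\PAR\inputxy{x}{z}.\tStop$ with $P\sim_c\tStop$, distinguished by a $\CH$-context that does \emph{not} initialize the channel machinery for the free name $x$: then $\mathit{unchan}~x$ (and hence the whole handshake) is stuck, so $\PCtxt[\tau(P)]\mustdivtau$ while $\PCtxt[\tau(\tStop)]\mustcontau$. Your proposed context does the opposite --- it binds $x$ to a properly initialized fresh channel --- in which case the translated communication on $x$ proceeds exactly as in the $\nu x$ case and there is no reason to expect an observable difference; moreover your ``orphan output'' heuristic does not produce a $\pi$-equivalent pair, since an idle output \emph{is} observable in $\pistop$ by a parallel matching input guarding $\tStop$. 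Without a concrete pair for which you verify both $\sim_c$ in $\pistop$ and distinguishability of the translations, the first claim is not proved.

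For closed full abstraction, the $(\Leftarrow)$ direction via adequacy matches the paper. In the $(\Rightarrow)$ direction your opening moves (specialize to $C=[\cdot]$, apply convergence-equivalence) are right, but the lift to arbitrary inner $\CH$-contexts $C''$ --- which is exactly what $\le_{c,\tau_0}$ quantifies over --- is not carried out. Your ``second application'' of \cref{lemma:D-context-lemma-closed} is circular: the lemma converts the direct convergence implications for $\Couttau[C''[\tau(P_i)]]$ into a process-level preorder, but those direct implications are precisely what remains to be proved. And \cref{prop:tau-is-compositional} cannot bridge the gap, since it speaks only about contexts of the form $\tau(C)$ with $C\in\pistopC$, not about arbitrary $\CH$-contexts. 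The paper discharges this step by invoking \cref{lemma:D-context-lemma-closed} as a context lemma that reduces $\le_{c,\tau_0}$ to the empty inner context (available because $\tau(P_i)$ is closed up to $\istop$); you correctly flag this as ``the hard part'' but then replace the argument with an informal remark about the opacity of the $\nu$-bound MVars, which restates what must be shown rather than proving it.
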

\begin{proof}
The first part holds, since an open translation can be closed  by a context without initializing the $\nu$-bound MVars. 
For $P = \bar{x}(y).\tStop \PAR x(z).\tStop$, we have $P \sim_c \tStop$
 but
$\tau(P) \not\sim_{c,0} \tau(\tStop)$:
let $\PCtxt$ be a context
that does not initialize the MVars for $x$ (as the translation does). Then $\PCtxt[\tau(P)]\mustdivtau$, but $\PCtxt[\tau(\tStop)]\mustcontau$.
Restricted to closed processes,  full abstraction holds:      $P_1 \leq_c P_2 {\implies} \tau(P_1) \leq_{c,\tau_0} \tau(P_2)$ follows
from \cref{lemma:D-context-lemma-closed},
    since $\tau_0$ produces closed processes in context $\Couttau$.
    \Cref{thm:adequate} implies the other direction.  
    \end{proof}


 
\section{Translations with Global MVars}\label{sect:global-translations}
In this section we investigate translations that do not use private MVars, but use a fixed number of global
MVars. We first motivate this investigation. The translation $\tau$ is quite complex and thus we want to figure out whether there are simpler translations. 
A further reason is that 
$\tau$ is not optimal, since  it generates one MVar per communication which can be garbage-collected after the communication, however, generation and garbage collection 
require resources and thus the translation $\tau$ may be inefficient in practice. 

To systematically search for small global translations we implemented an automated tool.
It searches for translations with global MVars
(abstracting from a lot of other aspects of the translation) and tries to refute the 
correctness. As we show, most of the small translations are 
shown as incorrect by our tool. Analyzing correctness of the remaining translations can then be done by hand.

We only consider the aspect of how to encode the synchronous message passing of the $\pi$-calculus, the other aspects (encoding parallel composition, replication and the $\tStop$-constant) are not discussed and we assume that they are encoded as before (as the translation $\tau$ did). 
We also keep the main idea to translate a channel of the $\pi$-calculus into $\CH$ by representing it 
as an object of a user-defined data type \texttt{Channel} that consists of
 an MVar for transferring the message 
 (which again is a \texttt{Channel}), and
 additional MVars for implementing a correct synchronization mechanism.
For the translation $\tau$,
we used a private MVar 
(created by the sender, 
 and transferred together with the message).
Now 
we investigate translations where 
this mechanism is replaced by one or several public MVars, which are created once together with the channel object.
To restrict the search space for translations, only the synchronization mechanism of MVars (by emptying and filling them) is used, but we forbid to transfer specific data (like numbers etc.).
Hence, we restrict these MVars (which we call \emph{check-MVars}) to be of type 
\texttt{MVar~()}.
Such MVars are comparable to binary semaphores, where filling and emptying correspond to operations signal and wait.
In summary, we analyze translations of $\pi$-calculus channels into a $\CH$-data type $\texttt{Channel}$ defined in Haskell-syntax
as
$$\texttt{data Channel\,=\,Chan\,(MVar\,Channel)}\,(\texttt{MVar\,()})\ldots(\texttt{MVar\,()})
 $$
A $\pi$-calculus channel $x$  is 
represented as a $\CH$-binding
$x = \texttt{Chan}~content~check_1~\ldots~check_n$
where $content$, $check_1,\ldots,check_n$ are appropriately 
initialized ({\ie}~empty) MVars. 
The MVars are public (or global), since all processes which know  $x$ have access to the components of the channel. 
After fixing this representation of a $\pi$-channel in $\CH$, 
the task is to translate the 
 input- and output-actions $\inputxy{x}{y}$ and  $\outputxy{x}{z}$ into $\CH$-programs such that the interaction reduction is performed correctly and synchronously.
%
We call the translation of $\inputxy{x}{y}$, 
the \emph{receiver} (program)
and the translation of $\outputxy{x}{z}$ the \emph{sender} (program).
As a simplification, we restrict the 
allowed operations
of the sender and receiver allowing only the operations:
\begin{description}
 \item[$\putS$:] The sender puts its message into the $contents$-MVar of the channel.  
  It represents the expression 
  $\tcase_\texttt{Channel}~x~\tof~(\tChan~c~a_1~\ldots~a_n \casepf~\tputMVar~c~z~\tbindthen~e)$
  in $\CH$ where $e$ is the remaining program of the sender. The operation occurs exactly once in the sender program.  We write it as $\putS_x~z$, or as $\putS$, if $x$ and $z$ are clear.
 \item[$\takeS$:] The receiver takes the message from the $contents$-MVar of channel $x$ and 
 replaces name $y$ by the received name in the subsequent program.
 The operation occurs exactly once in the receiver program. 
 We write it as $\takeS_x~y$, or as $\takeS$,
  if $x$ and $y$ are clear. It represents the $\CH$-expression
 $\tcase_\texttt{Channel}~x~\tof~(\tChan~c~a_1~\ldots~a_n \casepf~\ttakeMVar ~c~\tbind~\lambda y.e)$
 where $e$ is the remaining program of the receiver.
 In $\bfdo$-notation, we write $\bfdo~\{y \leftarrow \takeS_x;e\}$ to abbreviate the above $\CH$-expression.
 
 \item[$\putCh$ and $\takeCh$:] The sender and the receiver may synchronize on a check-MVar $check_i$ by putting $()$ into it or by emptying the MVar. 
 These operations are written as $\putCh_x^i$ and $\takeCh_x^i$, or also
 as $\putCh^i, \takeCh^i$ if the name $x$ is clear.
 We write $\putCh$ and $\takeCh$ if there is only one check-MVar.
  Let $e$ be the remaining program of the sender or receiver. Then $\putCh_x^i$ represents
the $\CH$-expression
  $\tcase_\texttt{Channel}~x~\tof~(\tChan~c~a_1~\ldots~a_n \casepf~\tputMVar~a_i~()~\tbindthen~e)$
  and  $\takeCh_x^i$ represents 
  the expression
$\tcase_\texttt{Channel}~x~\tof~(\tChan~c~a_1~\ldots~a_n \casepf~\ttakeMVar~a_i~\tbindthen~e).$
%

\end{description}
  
We restrict our search for translations to the case that the sender and the receiver programs are sequences of the above operations,
assuming that they are independent of the channel name $x$. With this 
restriction, we can abstractly write the translation of the sender and the receiver as a pair of sequences, where only 
$\putS,\takeS,\putCh^i$ and $\takeCh^i$ operations are used. We make some more  restrictions:

\begin{definition}\label{def:translation}
  Let $n > 0$ be a number of check-MVars.
  A {\em standard global synchronized-to-buffer translation} 
  (or {\em gstb-translation}) 
  is 
  a pair $(T_\mathit{send},T_\mathit{receive})$ of a send-sequence $T_\mathit{send}$ and a receive-sequence $T_\mathit{receive}$ consisting 
  of $\putS$, $\takeS$, $\putCh^i$ and $\takeCh^i$ operations, where 
    the send-sequence contains $\putS$ once,  the receive-sequence contains $\takeS$ once, and
     for every $\putCh^i$-action in $(T_\mathit{send},T_\mathit{receive})$, there is also a $\takeCh^i$-action in $(T_\mathit{send},T_\mathit{receive})$.
   W.l.o.g., we  assume that in the send-sequence the indices $i$ are ascending. I.e. if  
     $\putCh^i$ or $\takeCh^i$ is  before $\putCh^j$ or $\takeCh^j$, then $i < j$ holds.
\end{definition}

We often say translation instead of gstb-translation, if this is clear from the context.

\begin{definition}\label{def:induced-translation}
Let $T=(T_\mathit{send},T_\mathit{receive})$ be a  gstb-translation.
We write $T_\mathit{send}^{x,y}$ for the program $T_\mathit{send}$ 
instantiated for   $\outputxy{x}{y}$,
\ie~$\putS$ is $\putS_x~y$, and all other operations are
indexed with  $x$.
We write $T_\mathit{receive}^{x,y}$ for the program $T_\mathit{receive}$ 
instantiated for $\inputxy{x}{y}$,
\ie~$\takeS$ is $\takeS_x~y$, and all other operations are indexed with $x$.
%
\begin{figure*}[tp]
$\begin{array}{@{}l@{\,}l@{\,}l@{}}
\tauNullInduced(P) &= &  \UMTHREAD{\bfdo~\{}
                 \begin{array}[t]{@{}l@{}}
                  \istop \leftarrow \tnewMVar~();  
                  \tforkIO~\tauInduced(P);  
                  \tputMVar~\istop~()\}
                 \end{array}
            \\
    \tauInduced(\outputxy{x}{y}.P)  & =& \bfdo~\{T_\mathit{send}^{x,y};\tauInduced(P)\}
            \\[1ex]
    \tauInduced(\inputxy{x}{y}.P) & =& \bfdo~\{T_\mathit{receive}^{x,y};\tauInduced(P)\}
            \\
    \tauInduced(P \PAR Q) &  =& \bfdo~\{\tforkIO~\tauInduced(Q);\tauInduced(P)\}
            \\\\[-2.8ex]
     \tauInduced(\nu x.P) &  =&  \bfdo\,\{
        \begin{array}[t]{@{}l@{}}
        \mathit{contx} \leftarrow \tnewEmptyMVar;
        \mathit{checkx}_{1} \leftarrow \tnewEmptyMVar;
        \ldots;
        \mathit{checkx}_{n} \leftarrow \tnewEmptyMVar;        
         \\
        \tletr~ x = \tChan\,\mathit{contx}~\mathit{checkx}_{1}\ldots\mathit{checkx}_{n}\tin~\tauInduced(P)\}\\
     \end{array}
\\
\tauInduced(0) &  = & {\treturn~()} 
\\
\tauInduced(\tStop) &  = & {\ttakeMVar~\istop}  
\\
\tauInduced(!P) & = & {\tletr~ f = \bfdo~\{\tforkIO~\tauInduced(P); f\}~\tin~ f}  
\end{array}$
\caption{Induced translations $\tauInduced$ and $\tauNullInduced$ where  $T=(T_\mathit{send},T_\mathit{recieve})$ uses $n$  check-MVars \label{fig:def-tauinduced}}
\end{figure*}
The \emph{induced translations} $\tauNullInduced$ and $\tauInduced$ of $(T_\mathit{send},T_\mathit{receive})$ are defined in \cref{fig:def-tauinduced}.
%
 \end{definition} 
 
The induced translations are defined similar to the translations $\tau_0$ and $\tau$, where the differences are the representations of the channel. The translation of $\nu x$, $\inputxy{x}{y}$, and $\outputxy{x}{y}$ is different, but
the other cases remain the same.
Since $\tauNullInduced(P) = \Couttau[\tauInduced(P)]$
and by the same arguments as in  \cref{thm:adequate}, we have:

\begin{proposition}\label{prop:adequateInduced}
If $\tauInduced$ is closed convergence-equivalent, 
then $\tauInduced$ is adequate.
\end{proposition}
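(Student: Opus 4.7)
The plan is to mimic almost verbatim the proof of \cref{thm:adequate}, replacing $\tau$ by $\tauInduced$ and $\tau_0$ by $\tauNullInduced$. The proof of \cref{thm:adequate} relied on three ingredients: closed convergence-equivalence of $\tau$ (given here by hypothesis for $\tauInduced$), compositionality of $\tau$ upto $\{\maycontau,\mustcontau\}$ (\cref{prop:tau-is-compositional}), and the closing-context lemma \cref{lem:conv-open-equiv-closed} which lets us restrict attention to closed $\pistop$-contexts. The first and third are already available, so the main preliminary step is to establish the analogue of \cref{prop:tau-is-compositional} for $\tauInduced$.

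First I would prove that $\tauInduced$ is compositional upto $\{\maycontau,\mustcontau\}$: inspecting the clauses of $\tauInduced$ in \cref{fig:def-tauinduced}, one sees that each process constructor of $\pistop$ is translated into a $\CH$-expression whose shape depends only on the constructor and on the (independently obtained) translations of its immediate subprocesses; no clause of $\tauInduced$ inspects or rewrites the translation of a subterm. Consequently, for every $C \in \pistopC$ and $P \in \pistop$ with $\FV(C[P]) \subseteq \{\istop\}$, the expressions $\tauInduced(C[P])$ and $\tauInduced(C)[\tauInduced(P)]$ are syntactically identical (after the standard hole-as-constant convention $\tauInduced([\cdot]) = [\cdot]$), which trivially yields equivalence with respect to $\maycontau$ and $\mustcontau$. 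This is the exact analogue of \cref{prop:tau-is-compositional} and needs no new ideas.

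With compositionality in hand, I would argue adequacy as follows. Assume $\tauInduced(P) \le_{c,\tauNullInduced} \tauInduced(P')$, where $\le_{c,\tauNullInduced}$ is defined from $\tauNullInduced$ in the same way as $\le_{c,\tau_0}$ was from $\tau_0$. By \cref{lem:conv-open-equiv-closed} it suffices to compare $C[P]$ and $C[P']$ for $\pistop$-contexts $C$ that close both processes. Fix such a $C$ with $C[P] \maycon$. By the assumed closed convergence-equivalence, $\tauNullInduced(C[P]) \maycon$, i.e.\ $\Couttau[\tauInduced(C[P])] \maycon$. Compositionality gives $\Couttau[\tauInduced(C)[\tauInduced(P)]] \maycon$; the assumed inequality then yields $\Couttau[\tauInduced(C)[\tauInduced(P')]] \maycon$; compositionality again rewrites this as $\Couttau[\tauInduced(C[P'])] \maycon$, i.e.\ $\tauNullInduced(C[P'])\maycon$; and closed convergence-equivalence gives $C[P'] \maycon$. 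The same chain works with $\mustcon$ in place of $\maycon$. Hence $P \le_c P'$, and $\tauInduced$ is adequate.

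No part of the argument is really hard: the only potentially delicate point is checking that the $\CH$-expression produced by each clause of $\tauInduced$ genuinely embeds $\tauInduced$ of its subprocess as a subexpression with no intervening computation or binding that could cause a mismatch with $\tauInduced(C)[\tauInduced(P)]$. Since the clauses are structural (each one is of the form $\bfdo\{\ldots;\tauInduced(P)\}$ or a $\tletrec$ around $\tauInduced(P)$) and the induced sender/receiver sequences from \cref{def:induced-translation} do not depend on the surrounding process, this syntactic check goes through immediately, so the proposition follows.
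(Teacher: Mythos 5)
Your proposal is correct and follows essentially the same route as the paper, which proves this proposition by observing $\tauNullInduced(P) = \Couttau[\tauInduced(P)]$ and repeating the argument of \cref{thm:adequate} (closed convergence-equivalence plus compositionality plus \cref{lem:conv-open-equiv-closed}). Your explicit check that the compositionality analogue of \cref{prop:tau-is-compositional} holds for $\tauInduced$ because each clause is purely structural is exactly the step the paper leaves implicit.
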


An {\em execution} of a translation $(T_\mathit{send},T_\mathit{receive})$ for name $x$ is the simulation of the \emph{abstract} program, \ie~a program that starts with empty MVars $x$, $ x_1,\ldots,x_n$, and is an interleaved sequence 
of actions from the send and receive-sequence $T_\mathit{send}$ and $T_\mathit{receive}$, respectively.

To speak about the translations we make further classifications: We say that
 a translation allows {\em multiple uses},
 if a check-MVar is used more than once, \ie~the sender or receiver may contain $\takeCh^i$ or $\putCh^i$ more than once for the same $i$.
 A translation has the {\em interprocess check restriction}, if for every $i$: 
 $\takeCh^i$ and $\putCh^i$ do not occur both in $T_\mathit{send}$, and also not both in $T_\mathit{receive}$.
%
%
\begin{definition}
 A translation $T=(T_\mathit{send},T_\mathit{receive})$ according to \cref{def:translation} is 
\begin{itemize} 
\item \emph{executable}
   if there is a deadlock free 
execution of $T$; 
\item  \emph{communicating}, if  $T_\mathit{send}$ contains at least one $\takeCh^i$-action; 
\item \emph{overlap-free} if for a fixed name $x$, starting with empty MVars, every 
     interleaved (concurrent) 
      execution 
      of $(T_\mathit{send},T_\mathit{receive})$ 
      cannot be disturbed by starting another execution of $T_{\mathit{send}}$ and/or $T_{\mathit{receive}}$. 
      More formally, let $((s_1;...;s_i); (r_1;....r_j))$
and $((s'_1;...;s'_i); (r'_1;....r'_j))$
be two copies of $(T_{send},T_{receive})$  
for a fixed name $x$.
We call a command $a_k$ from one of the four sequences, 
an $a$-action for $a \in \{s,s',r,r'\}$.
The translation $T$
is overlap-free if every execution of
the four sequences has the property 
that it can be split into a prefix and a suffix
(called parts in the following)
such that one of the following properties holds
\begin{enumerate}
\item  One part contains only $s$- and $r$-actions, 
  and the other part contains only $s'$- and $r'$-actions.
\item  One part contains only $s$- and $r'$-actions
  and the other part contains only $s'$- and $r$-actions. 
\end{enumerate}
 

\end{itemize}
\end{definition}
   
We implemented a tool to enumerate translations and to test whether each translation preserves and reflects may- and should-convergence
for a (given) finite set of processes. 
Hence, our tool can refute the correctness of translations, but it can also output (usually small) sets of translations which are not refuted and which are promising candidates for correct translations.
The above mentioned parallel execution of $T_{send}$ and $T_{receive}$  is not sufficient to refute 
most of the translations, since it corresponds to the evaluation of 
the $\pi$-program
$\NEW x.(\inputxy{x}{y} \PAR \outputxy{x}{z})$ (which is must-divergent).
Thus, we apply the translation to a subset of $\pi$-processes, which we view as critical 
and for which we can automatically decide may- and should-convergence (before and after the translation).
We consider only $\pi$-programs of the form 
$(\nu x_1,\ldots,x_n.P)$ where $P$ contains only $0$, $\tStop$, parallel composition, and
input- and output-prefixes.
These programs are replication free and the $\nu$-binders are on the top, and hence terminate. In the following we omit the $\nu$-binders, but always mean them to be present. We also implemented techniques to generates all such programs until a bound on the size of the program is reached.

Our simulation tool\footnote{Available via \href{https://gitlab.com/davidsabel/refute-pi}{https://gitlab.com/davidsabel/refute-pi}.} can execute all possible evaluations of those $\pi$-processes and 
-- since all evaluation paths are finite -- the tool can check for may- and should-convergence of the $\pi$-program.
For the translated program, we do not generate a full $\CH$-program, but generate a sequence of sequences of $\takeS_x,\putS_x,\takeCh_x^i,\putCh_x^i~z$
and $\tStop$-operations by applying the translation to all action prefixes in the $\pi$-program and by encoding $\tStop$ as $\tStop$,
0 into an empty sequence. We get a sequence of sequences, since we have several threads and each thread is represented by one sequence.
For executing the translated program, we simulate the global store (of MVars) and execute all possible interleavings where we check for may- and should-convergence by looking 
whether the $\tStop$ eventually occurs at the beginning of the sequence. 
This simulates the behavior of the real $\CH$-program in a controllable manner.

With the encoding of the sender- and receiver program and a $\pi$-calculus process $P$ we
\begin{enumerate}
 \item translate $P$ with the encodings in the sequence of sequences consisting of $\takeS_x$, 
 $\putS_x$, $\takeCh_x^i$, $\putCh_x^i~z$
and $\tStop$-operations;
 \item simulate the execution on all interleavings; 
 \item test may- and should convergence of the original $\pi$-program $P$ as well as the encoded program (\wrt~the simulation);
 \item compare the convergence before and after the translation.
If there is a difference in the convergence behavior, then $P$ is a counter-example for the correctness of the encodings.
\end{enumerate}

\begin{example}
Let us consider the gstb-translation
 $(T_\mathit{send},T_\mathit{receive}) = 
 ([\takeCh,\putS], [\putCh,\takeS])$
and  the $\pi$-process
 $P = \nu x,y,z,w(\outputxy{x}{y}.\inputxy{x}{z}.\tStop \PAR \inputxy{x}{w}.0)$.
Our tool recognizes that $P{\uparrow}$ and $P{\Uparrow}$
holds, since $P$ reduces to the must-divergent process 
$\nu x,z.(\inputxy{x}{z}.\tStop)$ and there are no other reduction possibilities. 

Applying $(T_\mathit{send},T_\mathit{receive})$
to $P$ yields
 the
 abstract program 
$$q := [[\takeCh_x,\putS_x~y,\putCh_x,\takeS_x~z,\tStop],[\putCh_x,\takeS_x~w]].$$
For $q$, our tool 
inspects all executions.
Among them there is the sequence
$$\putCh_x;\takeCh_x;\putS_x~y;\putCh_x;\takeS_x~z;\tStop$$
which can be executed ending in $\tStop$. Thus $q$ is may-convergent, and thus the process
$P$ is a counter-example that refutes the correctness of the translation.
\end{example}

The case that there is no check-MVar leads to one possible translation $([\putS],[\takeS])$ which means that
$\outputxy{x}{z}$ is translated into $\putS_x~y$
and $\inputxy{x}{y}$ is translated into $\takeS_x~y$. 
This translation is not correct, since for instance the $\pi$-process $\outputxy{x}{z}.\inputxy{x}{y}.\tStop$ is neither may- nor should-convergent, but the translation
(written as an abstract program)
is  $[[\putS_x~z,\takeS_x~y,\tStop]]$.
I.e., it consists of one process which is may- and should-convergent (since $\putS_x~z;\takeS_x~y;\tStop$ is the only evaluation sequence and its execution ends in $\tStop$).
Note that the translation into $\CH$ will generate two threads: the main threads that will wait until the MVar $\istop$ is filled, and a concurrent thread that will do the above operations.


\subsection{Translations with Interprocess Check Restriction }
We  consider translations with the interprocess check restriction 
(each $\takeCh^i$ and $\putCh^i$ must be distributed between the sender and the receiver).
There are $n! \cdot 2^n \cdot (n+1)^2$ different translations (for $n$ check-MVars).
%
%
\begin{table*}[t]
\centering\begin{tabular}{|l|l|@{~}c@{~}|@{~}c@{~}|@{~}c@{~}|@{~}c@{~}|}
Translation (sender,receiver) & Counter-example  ($\pi$-process) 
& \rotatebox{0}{\begin{tabular}{@{}c@{}}\footnotesize $\maycon$ before\end{tabular}}
            &  \rotatebox{0}{\begin{tabular}{@{}c@{}}\footnotesize $\mustcon$ before\end{tabular}}
            &  \rotatebox{0}{\begin{tabular}{@{}c@{}}\footnotesize $\maycon$ after\end{tabular}}
            &  \rotatebox{0}{\begin{tabular}{@{}c@{}}\footnotesize $\mustcon$ after\end{tabular}}
\\
\hline
$([\putCh,\putS],[\takeCh,\takeS])$
 &  $\outputxy{x}{y}.\inputxy{x}{y}.\tStop$
 &N 
 & N
 &Y 
 &Y 
\\
   $([\putCh,\putS],[\takeS,\takeCh])$
  & $\outputxy{x}{y}.\inputxy{x}{y}.\tStop$
&N&N
&Y&Y
\\
  $([\putS,\putCh],[\takeCh,\takeS])$
 &  $\outputxy{x}{y}.\inputxy{x}{y}.\tStop$
&N&N
&Y&Y
\\
   $([\putS,\putCh],[\takeS,\takeCh])$
&  $\outputxy{x}{y}.\inputxy{x}{y}.\tStop$
&N&N
&Y&Y
\\
 $([\takeCh,\putS],[\putCh,\takeS])$
&   $\outputxy{x}{y}.\inputxy{x}{z}.\tStop~|~\inputxy{x}{w}$
&N&N
 &Y&N
\\ 
   $([\takeCh,\putS],[\takeS,\putCh])$
 &  $\outputxy{x}{y}.\tStop~|~\inputxy{x}{y}$
 &Y&Y
 &N&N
\\
  $([\putS,\takeCh],[\putCh,\takeS])$
 &  $\outputxy{x}{y}.\inputxy{x}{z}.\tStop~|~\inputxy{x}{w}$
&N&N
 &Y&N
\\ 
   $([\putS,\takeCh],[\takeS,\putCh])$
  & $\outputxy{x}{z}.\outputxy{z}{a}.\tStop~|~\outputxy{x}{w}.\outputxy{w}{a}.\tStop~|~\inputxy{x}{y}.\inputxy{y}{u}$
&Y&Y
 &Y&N
\end{tabular}
 \caption{Translations using one check-MVar and with the interprocess check restriction}\label{table-single-check}
%
\end{table*}
For a single check-MVar, all 8 translations are rejected by our tool, \cref{table-single-check} shows the translations and the obtained counter examples.
For $2$ check-MVars, our tool refutes all  72 translations.
Compared to \cref{table-single-check},
there are two further $\pi$-programs used as counterexample.
%
However, also the programs 
$\outputxy{x}{y}.\tStop \PAR \inputxy{x}{y}$  and   
$\outputxy{x}{y}.\inputxy{x}{z}.\outputxy{z}{q} \PAR \inputxy{x}{z} \PAR$ $\inputxy{x}{z}\PAR \outputxy{x}{z}\PAR \inputxy{y}{u}.\tStop$  suffice to refute all 72 translations.

\begin{theorem}\label{thm:no-gstb-translation-le-three}
There is no valid gstb-translation with the interprocess check restriction
for less than three check-MVars. 
\end{theorem}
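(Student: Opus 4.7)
The plan is to proceed by a finite case analysis on the number $n$ of check-MVars, covering $n=0,1,2$, and in each case enumerate every gstb-translation meeting the interprocess check restriction and exhibit a $\pi$-process whose may- or should-convergence is not preserved or reflected. Since convergence-equivalence is a necessary condition for a correct (adequate) translation by \cref{thm:may-must-equivalence,thm:adequate}, producing such a counter-example for every candidate rules out all valid gstb-translations with $n\leq 2$.

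The first step is the combinatorial bookkeeping. For $n=0$ the only candidate is $([\putS],[\takeS])$, trivially refuted by $\outputxy{x}{y}.\inputxy{x}{y}.\tStop$, which is must-divergent in $\pistop$ but whose abstract encoding $[[\putS_x\,y,\takeS_x\,y,\tStop]]$ reduces to $\tStop$ in its unique execution, hence is may- and should-convergent. For general $n$, under the interprocess check restriction each $i\in\{1,\dots,n\}$ contributes an independent binary choice (which side holds $\putCh^i$ vs.\ $\takeCh^i$), giving $2^n$ options. Imposing the ascending-index convention on the send-sequence fixes the order of its $n$ check operations, leaving $n+1$ insertion positions for $\putS$; the receive-sequence allows any permutation of its $n$ check operations and $n+1$ insertion positions for $\takeS$, yielding $n!\cdot(n+1)$ arrangements. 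Multiplying, the total is $n!\cdot 2^n\cdot(n+1)^2$, which matches the $8$ translations for $n=1$ and $72$ translations for $n=2$.

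The second step is the refutation. For $n=1$, I would reproduce the eight rows of \cref{table-single-check}: for each translation, simulate the abstract program on the listed $\pi$-process by executing all possible interleavings of the sequences produced by $\tauInduced$ on the given process, and compare the existence/universality of a $\tStop$-terminating execution against the may- and should-convergence status of the source process (which for these small, replication-free processes can be computed by exhaustive reduction). For $n=2$, the same method applies but over the $72$ candidates; the concrete refutations come from the automated tool, and I would organize them by noting that the paper already observes that two specific processes, $\outputxy{x}{y}.\tStop\PAR\inputxy{x}{y}$ and the five-thread process $\outputxy{x}{y}.\inputxy{x}{z}.\outputxy{z}{q}\PAR \inputxy{x}{z}\PAR \inputxy{x}{z}\PAR \outputxy{x}{z}\PAR \inputxy{y}{u}.\tStop$, suffice to refute all $72$ candidates, so my exposition can reduce the bookkeeping to (i) the disjunction of these counter-examples and (ii) a verification table analogous to \cref{table-single-check} assigning to each of the $72$ translations one of the refuting processes.

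The main obstacle is purely the scale of case-checking at $n=2$: hand-verifying $72$ candidate translations, each with potentially many interleaved executions, is error-prone, so the honest proof rests on the soundness of the simulation tool. I would therefore need to justify briefly that the abstract simulation faithfully represents $\CH$-reduction on $\tauInduced(P)$ for the class of $\nu$-bound, replication-free test processes under consideration, i.e.\ that the operations $\putS,\takeS,\putCh^i,\takeCh^i$ behave with respect to the simulated global store exactly as the corresponding $\ttakeMVar$/$\tputMVar$ actions on the \texttt{Channel}-representation behave in $\CH$, and that all possible interleavings are covered. Granted that soundness claim, convergence-inequivalence on even one source process refutes correctness of the candidate translation by \cref{thm:may-must-equivalence}, completing the proof that no executable, correct gstb-translation with interprocess check restriction exists for $n<3$.
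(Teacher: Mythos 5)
Your proposal matches the paper's own justification: the theorem is established by exhaustively enumerating the $n!\cdot 2^n\cdot(n+1)^2$ candidates for $n=1$ (8 translations, refuted by the counter-examples of \cref{table-single-check}) and $n=2$ (72 translations, all refuted by the automated simulation, in fact already by the two processes you cite), with the $n=0$ case handled by the single candidate $([\putS],[\takeS])$. Your combinatorial count, choice of counter-examples, and explicit acknowledgment that the argument rests on the faithfulness of the abstract simulation of $\putS,\takeS,\putCh^i,\takeCh^i$ with respect to $\CH$-reduction all coincide with what the paper does, so this is essentially the same proof.
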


 A reason for the failure of translations with less than three check-MVars may be:
\begin{theorem}\label{thm:no-small-translations}
 There is no executable, communicating, and overlap-free gstb-translation with the interprocess check restriction for $n < 3$. 
\end{theorem}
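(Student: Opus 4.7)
I will prove the theorem by a finite case analysis on $n \in \{0,1,2\}$, showing for each case that every gstb-translation with the interprocess check restriction fails at least one of the three properties (executable, communicating, overlap-free). For each translation I either identify a deadlock from two blocking prefix actions (non-executable), observe that no $\takeCh^i$ occurs in the send-sequence (non-communicating), or exhibit an explicit interleaved execution of two copies $(T_{\mathit{send}}, T_{\mathit{receive}})$ and $(T'_{\mathit{send}}, T'_{\mathit{receive}})$ which cannot be split into a prefix/suffix of the form required by either clause of overlap-freeness.

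The cases $n=0$ and $n=1$ are immediate. For $n=0$ the only gstb-translation is $([\putS], [\takeS])$, which has no $\takeCh^i$ in its send-sequence and so fails communicating. For $n=1$ the interprocess check restriction leaves the eight translations enumerated in Table~1: the four in which the sender uses $\putCh^1$ rather than $\takeCh^1$ are not communicating; the translation $([\takeCh, \putS], [\takeS, \putCh])$ deadlocks at the initial state (both processes block on empty MVars), and the remaining three can be refuted by an explicit non-overlap-free schedule in which the second receiver $R'$ inserts its $\putCh^1$ between the first pair's $\takeCh^1$ and $\putS$/$\takeS$, forcing $R'$-actions to appear strictly between actions of the first pair $(S,R)$.

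For $n=2$ there are $72$ translations, and this is the main obstacle. I organise them by the direction of each check-MVar (sender does $\putCh^i$ versus $\takeCh^i$) and the relative position of $\putS$ in the send-sequence. The cases in which the sender has no $\takeCh^i$ at all are immediately non-communicating. For the remaining direction patterns, I use two structural facts: (i) once a sender executes $\putS$ the content MVar is full until some receiver executes $\takeS$, so only one $\putS$ can be ``pending'' at a time, and (ii) each check-MVar starts empty, so a $\takeCh^i$ cannot be the first action of both sender and receiver on that MVar. Combining these, I show that in every such translation one can schedule both sender copies and both receiver copies so that either a deadlock occurs at the initial two prefixes, or the first pair $(S,R)$ finishes only after an action of the second pair has already executed between two of its actions. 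The latter case rules out clause~1 of overlap-freeness (since $S'$- or $R'$-actions appear strictly between $S$- and $R$-actions) as well as clause~2 (by symmetry, exchanging roles of $R$ and $R'$ produces the mirror obstruction). The hard part will be verifying that \emph{no} placement of $\putS$ and $\takeS$ in the two-check setting lets one simultaneously (a) establish a rendezvous before the content transfer, (b) deliver the content to the matching receiver, and (c) release the synchronization before the next pair begins; since the automated tool of the paper already refutes each of the $72$ translations by $\pi$-process counterexamples, an alternative (and cleaner) finishing move is to lift each such $\pi$-counterexample to an interleaved four-sequence execution at the abstract level, thereby directly witnessing an overlap.
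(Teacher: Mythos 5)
Your overall strategy---a finite case analysis classifying each of the $8$ (for $n{=}1$) and $72$ (for $n{=}2$) translations as non-communicating, deadlocking, or overlapping---is exactly the paper's approach. But there are two concrete problems. First, in the $n=1$ case your uniform refutation mechanism (``the second receiver $R'$ inserts its $\putCh^1$ between the first pair's $\takeCh^1$ and $\putS$/$\takeS$'') does not work for $([\putS,\takeCh],[\takeS,\putCh])$: there the receiver's $\putCh$ is only enabled after its own $\takeS$, which needs a full content MVar, and the schedule in which $R'$ steals the message after $S$'s $\putS$ splits cleanly into an $\{s,r'\}$ part followed by an $\{s',r\}$ part, i.e.\ it satisfies clause~2 of overlap-freeness. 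The actual overlap for that translation is caused by the second \emph{sender}: after $S{:}\putS$, $R{:}\takeS$, the action $S'{:}\putS$ is already enabled before the first pair's check handshake completes (this is the witness the paper gives). The translation is still refutable, but not by the schedule you describe.

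Second, and more seriously, the $n=2$ case---which is the bulk of the theorem---is not actually proved. You state two structural facts and assert that ``combining these'' yields a deadlock or an overlap in every remaining case, but you then explicitly flag the verification as ``the hard part'' and leave it open. Your proposed fallback, lifting the tool's $\pi$-process counterexamples to overlap witnesses, conflates this theorem with \cref{thm:no-gstb-translation-le-three}: those counterexamples witness a \emph{convergence} mismatch, not an overlap, and several of them (e.g.\ $\outputxy{x}{y}.\inputxy{x}{y}.\tStop$ against a non-communicating translation, or the deadlocking cases) involve no concurrent second sender/receiver pair at all, so there is no four-sequence execution to extract. The paper's own proof for $n=2$ is an exhaustive (machine-assisted, hand-checkable) classification into $18$ non-communicating, $21$ deadlocking, and $33$ overlapping translations; to make your proof complete you would either have to carry out that enumeration or supply an actual structural argument covering all direction patterns and placements of $\putS$/$\takeS$, neither of which is present.
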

\begin{proof}
For $n = 1$, we check the translations in \cref{table-single-check}. 
The first four are non-communicating.
For the translation $([\takeCh,\putS],[\takeS,\putCh])$ a deadlock occurs.
For $([\takeCh,\putS],[\putCh,\takeS])$, after  $\putCh$, $\takeCh$, we can execute $\putCh$ again.
For $([\putS,\takeCh],[\takeS,\putCh])$, after  executing $\putS$, $\takeS$ 
we can execute $\putS$ again.
For $([\putS,\takeCh],[\putCh,\takeS])$, after $\putCh$,$\putS$, $\takeCh$ 
we can execute $\putCh$ again.
For $n = 2$, the simulator finds no executable, communicating, and overlap-free translation:
18 translations are non-communicating, 21  lead to a deadlock, and  33 may lead to an overlap.   
\reportOnly{This check could also be done by hand by scanning all 72 translations.}
\end{proof}
For 3 MVars, our tool rejects 762 out of 768 translations  (using 
the same counter examples as for $2$ check-MVars) 
and the following 6 translations remain:
\begin{center}
\begin{tabular}{@{}l@{\,}c@{\,}l}
$\mathfrak{T}_1$ &$=$& $([\putS,\putCh^1,\takeCh^2,\putCh^3],[\takeCh^1,\putCh^2,\takeCh^3,\takeS])$
\\
$\mathfrak{T}_2 $ &$=$& $ ([\takeCh^1,\putS,\takeCh^2,\takeCh^3],[\putCh^3,\putCh^1,\takeS,\putCh^2])$
\\
$\mathfrak{T}_3 $ &$=$& $ ([\putCh^1,\putS,\takeCh^2,\putCh^3],[\takeS,\putCh^2,\takeCh^3,\takeCh^1])$
\\
$\mathfrak{T}_4 $ &$=$& $ ([\putCh^1,\putCh^2,\takeCh^3,\putS],[\takeCh^2,\putCh^3,\takeS,\takeCh^1])$
\\
$\mathfrak{T}_5 $ &$=$& $ ([\takeCh^1,\putS,\takeCh^2,\takeCh^3], 
[\putCh^1,\putCh^2,\takeS,\putCh^3])$
\\
$\mathfrak{T}_6 $ &$=$& $ ([\putCh^1,\takeCh^2,\putS,\takeCh^3],
[\takeCh^1,\putCh^2,\takeS,\putCh^3])$
\end{tabular}
\end{center}


\reportOnly{
\begin{remark}
We exhibit an example, why the translation 
$$(T_\mathit{send},T_\mathit{receive}) = ([\putS,\takeCh],[\takeS,\putCh])$$ is not correct. 
Consider the following $\pistop$-process
$$P = \nu x, z, w, a.(\outputxy{x}{z}.\outputxy{z}{a}.\tStop \PAR \outputxy{x}{w}.\outputxy{w}{a}.\tStop \PAR \inputxy{x}{y}.\inputxy{y}{w}.0)$$
which is should-convergent,  since the two possible reduction sequences starting with $P$ end in successful processes.

Consider the translated $\CH$-program where we use the commands abbreviating the concrete $\CH$-programs,
and where we assume that the program code for creating the MVars and the bindings for the \texttt{Channel}-objects are already
executed:
$$\begin{array}{@{}l@{\,}l@{\,}l@{\,}l@{\,}}
\multicolumn{4}{@{}l}{\nu \ldots.x = \tChan~c_x~chk_x \PAR z = \tChan~c_z~chk_z\PAR w=\tChan~c_w~chk_w \PAR \ldots
}\\
\multicolumn{4}{l}{\PAR \EMPTYMVAR{c_x} \PAR \EMPTYMVAR{chk_x} \PAR \EMPTYMVAR{c_z} \PAR \EMPTYMVAR{chk_z} \PAR \EMPTYMVAR{c_w} \PAR  \EMPTYMVAR{chk_w} \PAR \ldots
}
\\
 \multicolumn{4}{l}{ \PAR~ \bfdo \{\putS_x~z; \takeCh_x;
   \putS_z~a; \takeCh_z;\ldots \}} \\   
  \multicolumn{4}{l}{ \PAR~ \bfdo \{\putS_x~w; \takeCh_x;
   \putS_w~a; \takeCh_w; \ldots\}} \\   
   \multicolumn{4}{l}{\PAR~ \bfdo \{y \leftarrow \takeS_x; \putCh_x;
   w\leftarrow\takeS_y; \ldots \}} \\   
 \end{array}$$
  This reduces after executing $\putS_x~z$ and $y \leftarrow \takeS_x$
  $$\begin{array}{@{}l@{\,}l@{\,}l@{\,}l@{\,}}
\multicolumn{4}{@{}l}{\nu \ldots.x = \tChan~c_x~chk_x \PAR z = \tChan~c_z~chk_z\PAR w=\tChan~c_w~chk_w \PAR \ldots
}\\
\multicolumn{4}{l}{\PAR \EMPTYMVAR{c_x} \PAR \EMPTYMVAR{chk_x} \PAR \EMPTYMVAR{c_z} \PAR \EMPTYMVAR{chk_z} \PAR \EMPTYMVAR{c_w} \PAR  \EMPTYMVAR{chk_w} \PAR \ldots
}
\\
 \multicolumn{4}{l}{ \PAR~ \bfdo \{\takeCh_x;
   \putS_z~a; \takeCh_z;\ldots \}} \\   
  \multicolumn{4}{l}{ \PAR~ \bfdo \{\putS_x~w; \takeCh_x;
   \putS_w~a; \takeCh_w; \ldots\}} \\   
   \multicolumn{4}{l}{\PAR~ \bfdo \{\putCh_x;
   w\leftarrow\takeS_z; \ldots \}} \\   
 \end{array}$$   
 The second and third thread make steps: 
   $$\begin{array}{@{}l@{\,}l@{\,}l@{\,}l@{\,}}
\multicolumn{4}{@{}l}{\nu \ldots.x = \tChan~c_x~chk_x \PAR z = \tChan~c_z~chk_z\PAR w=\tChan~c_w~chk_w \PAR \ldots
}\\
\multicolumn{4}{l}{\PAR \MVAR{c_x}{w} \PAR \MVAR{chk_x}{()} \PAR \EMPTYMVAR{c_z} \PAR \EMPTYMVAR{chk_z} \PAR \EMPTYMVAR{c_w} \PAR  \EMPTYMVAR{chk_w} \PAR \ldots
}
\\
 \multicolumn{4}{l}{ \PAR~ \bfdo \{\takeCh_x;
   \putS_z~a; \takeCh_z;\ldots \}} \\   
  \multicolumn{4}{l}{ \PAR~ \bfdo \{\takeCh_x;
   \putS_w~a; \takeCh_w; \ldots\}} \\   
   \multicolumn{4}{l}{\PAR~ \bfdo \{
   w\leftarrow\takeS_z; \ldots \}} \\   
 \end{array}$$
 The second  thread can make an unexpected step: 
   $$\begin{array}{@{}l@{\,}l@{\,}l@{\,}l@{\,}}
\multicolumn{4}{@{}l}{\nu \ldots.x = \tChan~c_x~chk_x \PAR z = \tChan~c_z~chk_z\PAR w=\tChan~c_w~chk_w \PAR \ldots
}\\
\multicolumn{4}{l}{\PAR \MVAR{c_x}{w} \PAR \EMPTYMVAR{chk_x} \PAR \EMPTYMVAR{c_z} \PAR \EMPTYMVAR{chk_z} \PAR \EMPTYMVAR{c_w} \PAR  \EMPTYMVAR{chk_w} \PAR \ldots
}
\\
 \multicolumn{4}{l}{ \PAR~ \bfdo \{\takeCh_x;
   \putS_z~a; \takeCh_z;\ldots \}} \\   
  \multicolumn{4}{l}{ \PAR~ \bfdo \{
   \putS_w~a; \takeCh_w; \ldots\}} \\   
   \multicolumn{4}{l}{\PAR~ \bfdo \{
   w\leftarrow\takeS_z; \ldots \}} \\   
 \end{array}$$
 Now the only  possible step is 
   $$\begin{array}{@{}l@{\,}l@{\,}l@{\,}l@{\,}}
\multicolumn{4}{@{}l}{\nu \ldots.x = \tChan~c_x~chk_x \PAR z = \tChan~c_z~chk_z\PAR w=\tChan~c_w~chk_w \PAR \ldots
}\\
\multicolumn{4}{l}{\PAR \MVAR{c_x}{w} \PAR \EMPTYMVAR{chk_x} \PAR \EMPTYMVAR{c_z} \PAR \EMPTYMVAR{chk_z} \PAR \MVAR{c_w}{a} \PAR  \EMPTYMVAR{chk_w} \PAR \ldots
}
\\
 \multicolumn{4}{l}{ \PAR~ \bfdo \{\takeCh_x;
   \putS_z~a; \takeCh_z;\ldots \}} \\   
  \multicolumn{4}{l}{ \PAR~ \bfdo \{
    \takeCh_w; \ldots\}} \\   
   \multicolumn{4}{l}{\PAR~ \bfdo \{
   w\leftarrow\takeS_z; \ldots \}} \\   
 \end{array}$$
 and now the process is stuck.
\end{remark}
}

\begin{proposition}\label{prop:fourtrans-executable}
  The translations $\mathfrak{T}_1,\mathfrak{T}_2,\mathfrak{T}_3$, and $\mathfrak{T}_4$ are executable, communicating, and overlap-free, whereas the translations $\mathfrak{T}_5$ and $\mathfrak{T}_6$  are executable, communicating, 
  but overlapping.
\end{proposition}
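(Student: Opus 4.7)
The plan is to verify the three properties (communicating, executable, overlap-free) for each of the six translations. The first two are easy: \emph{communicating} holds for all six translations by direct inspection, since every listed $T_\mathit{send}$ contains at least one $\takeCh^i$-action. \emph{Executability} is witnessed for each translation by an explicit deadlock-free interleaving of sender and receiver actions; for instance, for $\mathfrak{T}_1$ the schedule
$\putS, \putCh^1, \takeCh^1, \putCh^2, \takeCh^2, \putCh^3, \takeCh^3, \takeS$
runs to completion from empty MVars, and one writes down analogous witnesses for $\mathfrak{T}_2,\ldots,\mathfrak{T}_6$ by scheduling every $\putCh^i$ just before its matching $\takeCh^i$ (and $\putS$ just before $\takeS$).

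For overlap-freeness of $\mathfrak{T}_1,\ldots,\mathfrak{T}_4$, I would consider two concurrent copies $(s_1,\ldots,s_4;r_1,\ldots,r_4)$ and $(s'_1,\ldots,s'_4;r'_1,\ldots,r'_4)$ from an all-empty store, and prove that every interleaved execution factors into a prefix/suffix of either type $\{s,r\}/\{s',r'\}$ or $\{s,r'\}/\{s',r\}$. The strategy is to identify, for each translation, one MVar that \emph{gates} the second copy. For $\mathfrak{T}_1$ the gate is $c_x$: both $s_1$ and $s'_1$ are $\putS$, yet $c_x$ admits only one occupant and $\takeS$ is the \emph{last} receiver action, so $s'_1$ must wait until the first receiver chain is complete. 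Dually, for $\mathfrak{T}_4$ the gate is $chk^1$ since $\putCh^1$ is first in the sender and $\takeCh^1$ is last in the receiver. For $\mathfrak{T}_2$ and $\mathfrak{T}_3$ the gate is assembled from the first contested synchronization ($chk^3$ for $\mathfrak{T}_2$, where $r_1=\putCh^3$ blocks $r'_1$; $chk^1$ for $\mathfrak{T}_3$, where $s_1=\putCh^1$ blocks $s'_1$). Once the gate is seized by one sender, all dependent actions of the second copy are blocked, so the only second-copy steps that can interleave are receiver actions that can be served by the active sender; a short case split on whether $r$ or $r'$ wins the first matching synchronization yields the two admissible split types.

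For $\mathfrak{T}_5$ and $\mathfrak{T}_6$ I would exhibit explicit overlapping executions. For $\mathfrak{T}_5$ the schedule
$r_1, s_1, r'_1, s'_1, s_2, r'_2, s_3, r'_3, r'_4, s_4, r_2, s'_2, r_3, s'_3, r_4, s'_4$
is valid under the MVar semantics but produces the actor-label sequence $r, s, r', s', s, r', s, r', r', s, r, s', r, s', r, s'$; the only prefixes lying within any two-element actor set are lengths $1$ and $2$ (namely $\{r\}$ and $\{r,s\}$), yet their complements contain all four actor types, so no admissible split exists. An analogous schedule
$s_1, r_1, s'_1, r'_1, r'_2, s_2, r_2, s_3, r'_3, s'_2, s'_3, r_3, r'_4, s_4, r_4, s'_4$
refutes overlap-freeness of $\mathfrak{T}_6$ by the same kind of argument on its actor-label sequence. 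The main obstacle lies in the universal quantifier in the second paragraph: ruling out \emph{every} schedule in which the two handshakes intermingle nontrivially. I propose to handle this by maintaining, throughout any execution, the invariant that the gate MVar remaining full blocks the second copy's gate action, and that every other dependency of second-copy actions chains back to that gate; this reduces the space of schedules to the canonical cases treated by the receiver-winner case split.
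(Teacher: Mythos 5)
Your proposal is correct and follows essentially the same route as the paper's (much terser) proof: for $\mathfrak{T}_1$--$\mathfrak{T}_4$ the paper likewise argues that a second send/receive is blocked until the current handshake completes, and for $\mathfrak{T}_5,\mathfrak{T}_6$ it points to exactly the $\putCh^1,\takeCh^1,\putCh^1$ re-execution that your explicit schedules extend to full overlapping runs. Your gate-MVar invariant and the completed witness schedules simply supply the detail the paper leaves implicit (and, for $\mathfrak{T}_2$, your case split on which copy wins the contested synchronization is in fact a more accurate account than the paper's blanket ``all 8 actions must finish first'').
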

\begin{proof}
We only consider overlaps. For $\mathfrak{T}_1$ - $\mathfrak{T}_4$, only if all 8 actions are finished,
the next send or receive can start.
For $\mathfrak{T}_5,\mathfrak{T}_6$,
after executing $\putCh^1$, $\takeCh^1$,  we can again execute $\putCh^1$.
\end{proof}

In \cite{schmidt-schauss-sabel:frank-60:19} we argue
that the induced translation $\tauInduced[\mathfrak{T}_1]$
leaves may- and should-convergence invariant.
The main help in reasoning is that there is no unintended interleaving of send and receive sequences according to \cref{prop:fourtrans-executable}.
Application of \cref{prop:adequateInduced} then shows:
\begin{theorem}\label{thm:tauinduced1-correct}
Translation $\tauInduced[\mathfrak{T}_1]$ is adequate.
\end{theorem}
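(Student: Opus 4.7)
The overall strategy is to invoke \cref{prop:adequateInduced}, so the task reduces to showing that $\tauInduced[\mathfrak{T}_1]$ is closed convergence-equivalent, that is, for all closed $P \in \Pi_\tStop$: $P\maycon \iff \tauInduced[\mathfrak{T}_1](P)\maycontau$, and analogously for $\mustcon$. Compositionality of $\tauInduced[\mathfrak{T}_1]$ (upto $\{\maycontau,\mustcontau\}$) is immediate by inspecting the clauses in \cref{fig:def-tauinduced} exactly as for \cref{prop:tau-is-compositional}. It is convenient to introduce, in parallel to the $\sigma$-technique used for $\tau_0$, an auxiliary ``unfolded'' translation $\sigmaInduced[\mathfrak{T}_1]$ that maps $\pi$-process structure (parallel composition, $\nu$-binders, replication) directly to the corresponding $\CH$-process components, together with a proof that $\tauInduced[\mathfrak{T}_1](P) \sim_c \sigmaInduced[\mathfrak{T}_1](P)$ via the correct transformations of \cite{sabel-schmidt-schauss-PPDP:2011}. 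This way, reasoning on standard-reduction sequences avoids repeatedly peeling off the monadic scaffolding generated by the $\bfdo$-blocks.

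The central technical step is an \emph{atomicity lemma} for $\mathfrak{T}_1$, obtained from its overlap-freeness (\cref{prop:fourtrans-executable}). Concretely, in any well-formed process reached from some $\sigmaInduced[\mathfrak{T}_1](P)$, once a sender thread and a receiver thread on the same channel $x$ have begun interacting through $\mathit{contx}$, $\mathit{checkx}_1, \mathit{checkx}_2, \mathit{checkx}_3$, no third sender or receiver on $x$ can interfere before both finish their 8-action sequences and the MVars of $x$ have returned to the empty state. This follows because the pattern $[\putS,\putCh^1,\takeCh^2,\putCh^3]$/$[\takeCh^1,\putCh^2,\takeCh^3,\takeS]$ uses each check-MVar such that the alternation of $\tputMVar$ and $\ttakeMVar$ on each of $\mathit{checkx}_1, \mathit{checkx}_2, \mathit{checkx}_3$ forces a strict round-trip handshake, preventing a second interleaved sender or receiver on $x$ from making progress, and after completion the content- and check-MVars are again empty, so the configuration is indistinguishable from a fresh channel.

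From the atomicity lemma I would build a simulation in both directions. For the ``$\implies$'' directions (preservation), every $\pistop$ standard reduction $Q \ia Q'$ lifts to a finite sequence of $\reduce$-steps from $\sigmaInduced[\mathfrak{T}_1](Q)$ to a process $\sim_c \sigmaInduced[\mathfrak{T}_1](Q')$: after choosing matching sender/receiver threads, the atomicity lemma lets us schedule their 8 actions without interleaving interference, and the remaining monadic housekeeping (data-constructor case splits, (lunit), (fork), (cpce)) is deterministic and correctness-preserving. For the ``$\Longleftarrow$'' directions (reflection), given a $\CH$-reduction sequence of $\sigmaInduced[\mathfrak{T}_1](P)$ ending in a successful (resp.\ must-divergent) process, I would use the atomicity lemma plus diamond-style commutation of correct transformations (as in the proof of \cref{thm:may-must-equivalence}) to reorder the sequence into completed 8-step ``communication blocks'' interspersed with functional reductions. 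Each completed block back-translates to exactly one $\ia$-step in $\pistop$, while unfinished blocks at the end of the sequence can be discarded since, by overlap-freeness, they neither reveal $\istop$ (no $\tStop$ was unblocked) nor obstruct any convergent continuation that does.

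The main obstacle is the reflection of should-convergence: we must show that if every $\CH$-descendant of $\tauInduced[\mathfrak{T}_1](P)$ can reach success, then every $\pistop$-descendant of $P$ can reach success. The difficulty is that $\CH$-schedulers may let threads commit to a partial handshake (e.g.\ a sender that has already executed $\putS$ and $\putCh^1$) without a matching receiver ever arriving, producing configurations with no pre-image under $\sigmaInduced[\mathfrak{T}_1]$. The plan is to show that such ``in-flight'' configurations are $\sim_c$-equivalent to a $\sigmaInduced[\mathfrak{T}_1]$-image of a pi-process in which the committed sender has been rewritten into an equivalent but partially evaluated form; combined with the atomicity lemma guaranteeing that the committed sender is still matched by exactly one compatible receiver in every may-converging continuation, this closes the simulation and yields the required adequacy via \cref{prop:adequateInduced}.
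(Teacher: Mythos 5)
Your proposal is correct and follows essentially the same route as the paper: reduce adequacy to closed convergence-equivalence via \cref{prop:adequateInduced}, and establish convergence-equivalence by exploiting the overlap-freeness of $\mathfrak{T}_1$ from \cref{prop:fourtrans-executable} (the paper defers the detailed convergence-equivalence argument to the technical report, using exactly the ``no unintended interleaving'' property you formulate as your atomicity lemma). Your elaboration of the simulation in both directions, including the $\sigma$-style unfolded translation and the treatment of in-flight handshakes, is consistent with the proof strategy the paper uses for $\tau_0$ in \cref{thm:may-must-equivalence}.
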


For 4 MVars, our tool refutes 9266  and there remain 334 candidates for correct translations.

\subsection{Dropping the Interprocess Check Restriction}
We now consider gstb-translations without the interprocess check restriction, i.e.~$\putCh^i$ 
 and $\takeCh^i$ both may occur in the sender-program (or the receiver program, resp.).
 If we allow one check-MVar without reuse,
 then there are 20 candidates for translations.
 All are refuted by our simulation.  
%
 Allowing reuse of the single check-MVar seems not to help to construct a correct translation:
 We simulated this for up to 6 uses, leading to 420420 candidates for a correct translation 
 -- our simulation refutes all of them.
%
 \begin{conjecture}\label{conjecture}
We conjecture that there is no correct translation for one check-MVar where 
 re-uses are permitted and the interprocess check restriction is dropped,
 i.e., $T_{send}$ is a word over $\{\putS,\putCh,$ $\takeCh\}$ and $T_{receive}$ a word over $\{\takeS,\putCh,$ $\takeCh\}$, where $\putS,\takeS$
 occur exactly once.
 \footnote{We already have a proof in the meantime,  not yet published.}
 \end{conjecture}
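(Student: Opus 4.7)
The plan is to show that every candidate pair $(T_{\mathit{send}}, T_{\mathit{receive}})$ of the stated form fails convergence-equivalence, by combining (i) strong syntactic constraints derived from correctness on a minimal $\pi$-process, (ii) a normalization/pumping argument that reduces to bounded-length sequences, and (iii) a finite case analysis on a small family of witness processes, possibly carried out by an extension of the refute-pi tool.

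First, I would extract necessary conditions by demanding correctness on $P_1 = \nu x, y.\,(\outputxy{x}{y}.\tStop \PAR \inputxy{x}{y}.\tStop)$, which must-converges in $\pistop$. Any interleaved execution of $T_{\mathit{send}}^{x,y}$ with $T_{\mathit{receive}}^{x,y}$ starting from an empty check-MVar must therefore terminate with both sequences completed, without deadlock. Since the check-MVar has only two states and each $\putCh$ (resp.~$\takeCh$) blocks on a full (resp.~empty) state, this forces a global balance condition: the combined counts of $\putCh$ and $\takeCh$ across the two sequences coincide, and every legal prefix of every interleaving keeps the check-MVar count within $\{0,1\}$. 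Combined with the positions of the unique $\putS$ and $\takeS$, this should yield a manageable menagerie of syntactic shapes.

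Second, to handle unbounded reuses, I would establish a pumping lemma of the following flavour: if either sequence contains a $\putCh;\takeCh$ or $\takeCh;\putCh$ fragment that can be removed without affecting the isolated single-pair execution, then deleting it yields an induced translation at least as correct (w.r.t.~convergence-equivalence) as the original. The intuition is that such an internal oscillation provides no useful synchronization in isolation, but in a multi-thread setting it exposes additional interleavings where a competing sender or receiver intercepts the emitted signal, so the longer translation is strictly more vulnerable to interference. This would reduce the problem to a finite family of normal-form translations of bounded length, which can then be refuted by an extended refute-pi run, using witnesses such as $\outputxy{x}{y} \PAR \inputxy{x}{y}.\tStop \PAR \inputxy{x}{y}$ (should-divergent in $\pistop$) and two-sender/one-receiver variants already used in the interprocess-restricted case.

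The main obstacle will be the pumping step itself: without the interprocess restriction, both sequences may contain self-synchronizations, where the sender emits a signal it later consumes itself (or vice versa for the receiver), and one must show that such self-synchronizations are either semantically vacuous or always hijackable by a competing thread on the same channel. Making this precise probably requires a bisimulation-style comparison between the original induced translation and its normalized counterpart, argued via the state of the shared check-MVar viewed as a single-bit token, together with a careful ordering of which competing thread can intercept the emission at each point. This is the step I expect to consume the bulk of the proof effort.
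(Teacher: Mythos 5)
The paper does not prove this statement at all: it is stated as a conjecture, supported only by an exhaustive machine search over all $420420$ candidates with up to~$6$ uses of the single check-MVar (the footnote refers to an unpublished proof that is not in this paper). So there is nothing in the paper to match your argument against; what matters is whether your proposed strategy is sound on its own, and it has a genuine gap exactly where you suspect it.

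The gap is the pumping/normalization lemma. For your reduction to a finite set of normal forms to work, you need the direction ``original correct $\implies$ normalized correct,'' i.e.\ deleting a self-synchronization fragment never destroys correctness. But self-synchronizations inside one sequence are precisely the mechanism that makes the surviving translations in the paper correct: in $\mathfrak{T}_7=([\putCh^1,\putS,\takeCh^2,\takeCh^1],[\takeS,\putCh^2])$ the sender both fills and later empties check-MVar~$1$, and this pair acts as a mutex excluding competing senders; deleting it yields $([\putS,\takeCh^2],[\takeS,\putCh^2])$, which is (up to renaming) the refuted translation $([\putS,\takeCh],[\takeS,\putCh])$ of \cref{table-single-check}. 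So deletion of a self-synchronization can turn a correct translation into an incorrect one, which is the opposite of the monotonicity you assert. Your lemma is phrased only for \emph{adjacent} $\putCh;\takeCh$ fragments, which dodges this particular example, but adjacent fragments do not suffice for normalization: the problematic reuses are exactly the separated ones that bracket other actions and thereby serve as guards, and for those your ``hijackable by a competitor'' intuition conflicts with the ``mutex'' role they demonstrably play. Your first step (balance and prefix conditions forced by the must-convergent single-pair process) is fine as far as it goes, and the final finite case analysis is unobjectionable once boundedness is established, but without a correct normalization argument the proposal does not reduce the infinite family to a finite one, so the conjecture remains unproven by this route.
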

For two MVars, one use and without the interprocess check restriction there are 420 translations. Our tool refutes all except for two: $\mathfrak{T}_7=([\putCh^1,\putS,\takeCh^2,\takeCh^1],[\takeS,\putCh^2])$ and 
$\mathfrak{T}_8=([\takeCh^1,\putS],
[\putCh^2,\putCh^1,\takeS,\takeCh^2])$.
In $\mathfrak{T}_7$ the second check-MVar is used as a mutex for the senders, ensuring that only one sender can operate at a time.
$\mathfrak{T}_8$ does the same on the receiver side.

\begin{proposition}\label{prop:2MVars-non-interprocess-check-restriction}
  The translations $\mathfrak{T}_7,\mathfrak{T}_8$
   are executable, communicating, overlap-free.
\end{proposition}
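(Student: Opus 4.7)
The plan is to verify the three properties in turn; executability and being communicating are immediate, while overlap-freeness carries the content. For executability, I would simply exhibit a run: the interleaving $\putCh^1, \putS, \takeS, \putCh^2, \takeCh^2, \takeCh^1$ completes $\mathfrak{T}_7$ from empty MVars (sender's first two actions, then the receiver, then the sender's last two), and $\putCh^2, \putCh^1, \takeCh^1, \putS, \takeS, \takeCh^2$ completes $\mathfrak{T}_8$. For the communicating property, the send-sequence of $\mathfrak{T}_7$ already contains $\takeCh^1$ and $\takeCh^2$, and that of $\mathfrak{T}_8$ contains $\takeCh^1$.

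For overlap-freeness, the first step would be a \emph{mutex lemma}: in $\mathfrak{T}_7$, between a sender's opening $\putCh^1$ and its closing $\takeCh^1$ no second sender can perform any action. Indeed $\putCh^1$ fills the MVar $check1$ (which must be empty for either sender's $\putCh^1$ to succeed), and the only operation in the four sequences that empties $check1$ is $\takeCh^1$, the sender's final action. The same argument applied to $check2$ and the receiver's endpoints $\putCh^2/\takeCh^2$ gives the symmetric mutex on receivers in $\mathfrak{T}_8$.

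From this lemma I would derive the required prefix/suffix split. Consider $\mathfrak{T}_7$ and, without loss of generality, suppose Sender~1 (actions $s$) performs $\putCh^1$ before Sender~2 (actions $s'$). The lemma blocks $s'$ until $s$ has finished. Meanwhile both receivers start with $\takeS$, which needs $content$ to be full; since $s'$ is blocked this is only possible after $s$'s $\putS$. Because $\takeS$ is atomic, exactly one receiver wins the race and becomes the \emph{partner} of $s$, while the other stays blocked until a further $\putS$ occurs. The partner's ensuing $\putCh^2$ is the unique $\putCh^2$ enabled before $s'$ restarts, so $s$ consumes it with $\takeCh^2$ and then releases $check1$ with $\takeCh^1$; the remaining sender and receiver then execute symmetrically. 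Hence the execution decomposes into a first half consisting only of $s$-actions together with those of the partner (i.e., only $s$- and $r$-actions, or only $s$- and $r'$-actions) and a second half of the complementary type, which is exactly Condition~1 or Condition~2 of the definition. The argument for $\mathfrak{T}_8$ is symmetric, with receivers serialized by the $check2$ mutex and each receiver pairing with a unique sender via the $\putCh^1/\takeCh^1$ handshake that surrounds the sender's $\putS$.

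The main obstacle I expect is pinning down the uniqueness-of-partner step rigorously, because there is a genuine race at $\takeS$ (and at $\takeCh^1$ in $\mathfrak{T}_8$). The key is to exploit atomicity of MVar operations: exactly one thread wins each contested step, and the surrounding mutex MVar then forces every other contender to wait until the committed pairing has completed. This closely parallels the overlap-freeness reasoning for $\mathfrak{T}_1$--$\mathfrak{T}_4$ in \cref{prop:fourtrans-executable}.
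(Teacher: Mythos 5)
Your proof is correct and follows essentially the same route as the paper: executability and the communicating property by direct inspection, and overlap-freeness by observing that the bracketing pair $\putCh^1/\takeCh^1$ in $\mathfrak{T}_7$ (resp.\ $\putCh^2/\takeCh^2$ in $\mathfrak{T}_8$) serializes the senders (resp.\ receivers), while the other role is blocked until a fresh $\putS$ (resp.\ $\putCh^1$) that only the next critical section can supply. Your version merely makes the mutex lemma and the partner-uniqueness race explicit, which the paper leaves implicit.
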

\begin{proof}
The translations are executable and communicating. 
For $\mathfrak{T}_7$, $\putCh^1,\putS$ and $\takeS$ are
performed in this order.
An additional sender
cannot execute its first command before the original sender performs $\takeCh^1$ and this again is only possible after the receiver program is finished. An additional receiver can only be executed after a $\putS$ is performed, which cannot be done by the current sender and receivers.
For $\mathfrak{T}_8$,  $\putCh^2,\putCh^1$ and $\takeCh^1$ are performed in this order. An additional receiver can only start after $\takeCh^2$ was executed by the original receiver, which can only occur after the original sender and receiver program are fully evaluated. An additional sender  can only start after $\putCh^1$ has been executed again, but the current sender and receiver cannot execute this command.
\end{proof}
The induced translation $\tauInduced[\mathfrak{T}_7]$
%
is (closed) 
convergence-equivalent  \cite{schmidt-schauss-sabel:frank-60:19}.
With \cref{prop:adequateInduced} this shows:
\begin{theorem}\label{thm:tauinduced2-correct}
Translation $\tauInduced[\mathfrak{T}_7]$ is adequate.
\end{theorem}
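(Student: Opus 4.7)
The plan is to deduce adequacy from \cref{prop:adequateInduced}, which reduces the task to establishing closed convergence-equivalence of $\tauInduced[\mathfrak{T}_7]$: for every closed $P \in \pistop$, $P \maycon \iff \tauInduced[\mathfrak{T}_7](P) \maycontau$ and $P \mustcon \iff \tauInduced[\mathfrak{T}_7](P) \mustcontau$. This mirrors exactly the structure used to pass from \cref{thm:may-must-equivalence} to \cref{thm:adequate} in the $\tau_0$-case.

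First I would set up a correspondence lemma analogous to the one used for $\tau_0$: define an auxiliary translation $\sigmaInduced[\mathfrak{T}_7]$ that maps $\pi$-syntax directly to $\CH$-process components (so that $\PAR$ and $\nu$ are translated structurally, not generated by monadic code), and show $\tauInduced[\mathfrak{T}_7](P) \reduce[,*] \sim_c \sigmaInduced[\mathfrak{T}_7](P)$ up to the initial unfolding reductions inside $\Couttau$. Then for each $\xrightarrow{sr}$-step in $\pistop$, I would show that $\sigmaInduced[\mathfrak{T}_7]$ can simulate it by a bounded sequence of $\CH$-standard reductions: the sender thread performs $\putCh^1;\putS;\takeCh^2;\takeCh^1$ and the receiver thread performs $\takeS;\putCh^2$, which fit together into a single ``synchronous interaction block''. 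The key enabling fact is \cref{prop:2MVars-non-interprocess-check-restriction}: $\mathfrak{T}_7$ is overlap-free, so $check^1$ acts as a mutex among senders on $x$ and the pair $(\putS, \takeS) \cdot (\putCh^2, \takeCh^2)$ realises the send/acknowledge handshake atomically relative to any competing translated interaction.

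For the reverse direction (reflection), I would mimic the rearrangement diagrams sketched in the discussion after \cref{thm:may-must-equivalence}: given a $\CH$-reduction sequence of $\sigmaInduced[\mathfrak{T}_7](P)$ ending in a successful (respectively must-divergent) state, use correct program transformations plus commutation of independent $\reduce$-steps to permute the sequence into one where every initiated sender/receiver protocol is completed before the next one starts. Overlap-freeness guarantees that such permutations are always legal: once $\putCh^1$ has fired, the lock on channel $x$ commits the system to executing $\putS$, then the unique matching $\takeS;\putCh^2$, then $\takeCh^2;\takeCh^1$, with all other threads on the same channel blocked at $\putCh^1$. The reordered sequence is then readable, block by block, as a $\xrightarrow{sr,*}$-sequence in $\pistop$.

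The main obstacle will be the should-convergence (equivalently, may-divergence) direction, in particular showing that a partially executed protocol in the $\CH$-image never creates spurious deadlocks absent from $P$. One must verify that each intermediate $\CH$-state $Q$ reachable from $\tauInduced[\mathfrak{T}_7](P)$ lies on some reduction path to a state of the form $\sigmaInduced[\mathfrak{T}_7](P')$ with $P \xrightarrow{sr,*} P'$, so that $Q \maycon$ iff $P' \maycon$; in particular, if a sender has crossed $\putCh^1$ and filled $c_x$ but no receiver is available, this corresponds to a $\pi$-state where the analogous $\outputxy{x}{\cdot}$-prefix is still live and enabled, hence the may-convergence behaviour is preserved. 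Once this ``no stuck-in-the-middle'' invariant is established, combining it with the forward simulation yields both convergence equivalences, and \cref{prop:adequateInduced} delivers adequacy of $\tauInduced[\mathfrak{T}_7]$.
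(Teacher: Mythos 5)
Your proposal follows exactly the paper's route: reduce adequacy to closed convergence-equivalence via \cref{prop:adequateInduced}, and establish the latter by a simulation/rearrangement argument hinging on the overlap-freeness of $\mathfrak{T}_7$ (\cref{prop:2MVars-non-interprocess-check-restriction}). The paper itself defers the convergence-equivalence step to the technical report, and your sketch of that step (auxiliary structural translation, reordering of independent reduction steps into complete protocol blocks, with $\putCh^1/\takeCh^1$ as the sender mutex) mirrors the methodology the paper describes for $\tau_0$ and for $\tauInduced[\mathfrak{T}_1]$.
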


We are convinced that the same holds for $\mathfrak{T}_8$.
We conclude the statistics of our search for translations without the interprocess restriction:
For 3 MVars, there are 10080 translations and 9992 are refuted, \ie~98 are potentially correct. One is
$([\putCh^1,\putS,\takeCh^2,\takeCh^1],[\putCh^3,\takeS,\putCh^2,\takeCh^3])$
which is quite intuitive: 
check-MVar $1$ is used as a mutex for all senders on the same channel, check-MVar $3$ is used as a mutex for all receivers, and check-MVar $2$ is used to send an acknowledgement.
%
%
For 4 MVars, there are  277200  translations and 273210 are refuted, \ie~3990  are potentially correct.
\section{Discussion and Conclusion}\label{sec:concl}
We investigated translating 
the $\pi$-calculus into $\CH$ and showed correctness and adequacy 
of a translation $\tau_0$ with private MVars for every 
translated communication. 
%
For translations with global names, 
we started an investigation on exhibiting (potentially) correct translations. 
We identified several minimal potentially correct translations
and characterized all incorrect ``small'' translations. 
%
For two particular global translations, we have shown  that they are convergence-equivalent
and we proved their adequacy on open processes.  
The exact form of the translations were found
by our tool to search for translations
and to refute their correctness. The tool showed that there is no correct gstb-translation with the interprocess check restriction for less than 3 check-MVars.
%
We also may consider 
extended variants of the $\pi$-calculus. We are convinced that adding recursion and sums can easily be built into 
the translation, while it might be challenging to encode  mixed sums or (name) matching operators. For name matching operators, our current translation
would require to test usual bindings in $\mathit{CH}$ for equality which is not available in core-Haskell. 
Solutions may either use an adapted translation or a target language that
supports observable sharing \cite{Claessen-Sands:99,Gill:09}.  The translation of mixed-sums into $\mathit{CH}$ appears to require  more complex translations,
where the send- and receive-parts are not linear lists of actions.
%
%
\paragraph*{Acknowledgments}
We thank the anonymous reviewers for their valuable comments. In particular, we thank an anonymous reviewer for advises to improve  the construction of translations, and  for providing the counter-example
in the last row of \cref{table-single-check}.
\bibliographystyle{eptcs}
\bibliography{pichfbib}
\clearpage
\end{document}